\newif\ifpublic
\newcommand*\samethanks[1][\value{footnote}]{\footnotemark[#1]}
\renewcommand{\eqref}[1]{\hyperref[#1]{(\ref*{#1})}}
\theoremstyle{plain}
\newtheorem{theorem}{Theorem}[section]
\newtheorem{corollary}[theorem]{Corollary}
\newtheorem{lemma}[theorem]{Lemma}
\newtheorem{question}{Question}
\newtheorem{proposition}[theorem]{Proposition}
\newtheorem{definition}[theorem]{Definition}
\newtheorem{claim}[theorem]{Claim}
\theoremstyle{definition}
\newtheorem{remark}[theorem]{Remark}
\newcommand{\F}{\mathbb{F}}
\newcommand{\naturals}{\mathbb{N}}
\newcommand{\integers}{\mathbb{Z}_{\geq 0}}
\newcommand{\pintegers}{\mathbb{Z}_{> 0}}
\renewcommand{\epsilon}{\varepsilon}
\renewcommand{\phi}{\varphi}
\newcommand{\polylog}{\mathrm{polylog}}
\newcommand{\set}[1]{\{#1\}}
\DeclarePairedDelimiter\floor{\lfloor}{\rfloor}
\newcommand{\tc}{\mathsf{TC}}
\newcommand{\calC}{{\mathcal{C}}}
\newcommand{\tcz}{\mathsf{TC}_{\mathbb{Z}}}
\newcommand{\Lag}{\operatorname{Lag}}
\newcommand{\splt}{\operatorname{split}}
\newcommand{\wt}{\operatorname{wt}}
\title{A note on the explicit constructions of tree codes over polylogarithmic-sized alphabet}
\author{
  Siddharth Bhandari\thanks{Tata Institute of Fundamental Research,
    INDIA. email: {\tt
      \{siddharth.bhandari,prahladh\}@tifr.res.in}. Research of
    the authors supported by the Department of Atomic Energy,
    Government of India, under project no. 12-R\&D-TFR-5.01-0500. Research
    of the second author supported in part by the Swarnajayanti fellowship.}
   \and
 Prahladh Harsha\samethanks
}
\begin{document}
\maketitle

\begin{abstract}
Recently, Cohen, Haeupler and Schulman gave an explicit construction of binary tree codes over polylogarithmic-sized output alphabet based on Pudl\'{a}k's construction of maximum-distance-separable (MDS)  tree codes using totally-non-singular triangular matrices. In this short note, we give a unified and simpler presentation of Pudl\'{a}k and Cohen-Haeupler-Schulman's constructions. 

\end{abstract}

\section{Introduction}
We begin by recalling the definition of tree codes as introduced by Schulman in 1993~\cite{Schulman1996} 

\begin{definition}[tree codes]
\label{defn:TreeCode}
Given two finite alphabets $\Sigma$ and $\Gamma$ and a parameter $\delta \in (0,1)$, a tree code $\tc\colon\Sigma^{*} \to \Gamma^{*}$ with distance $\delta$ is a function mapping $n$-length strings  over the alphabet $\Sigma$ to $n$-length strings over the alphabet $\Gamma$ for every positive integer $n$ satisfying the following two properties.
\begin{description}
\item[Online encoding:] For every positive integer $m$, there exist functions $\phi_m\colon\Sigma^m \to \Gamma$ such that for all $n \in \naturals$, $x \in \Sigma^n$ and $i \in [n]$, we have $\tc(x)_i = \phi_i(x_1,\dots,x_i)$. In other words, the $i^{\text{th}}$ symbol of the output only depends on the first $i$ symbols of the input and not on the latter symbols. The functions $\phi_i$ are referred to as the encoding functions. 
\item[Distance:] The (relative) distance of the tree code $\tc$, denoted by $\delta_\tc$, (defined as follows) is at least $\delta$.
\[ \delta_\tc := \inf_{n, x\neq x' \in \Sigma^n} \frac{\Delta(\tc(x),\tc(x'))}{n-\splt(x,x')}, \]
where $\splt(x,x')$ is the largest integer $s$ such that for all $i \leq s$, we have $x_i = x'_i$. In other words, for every positive integer $n$ and any two distinct $n$-length strings $x, x' \in \Sigma^n$, we have that the Hamming distance between $y= \tc(x)$ and $y'=\tc(x')$ is at least $\delta (n-\splt(x,x'))$. 
\end{description}
The string $y=\tc(x)$ is said to be the tree code encoding of $x$. This encoding is said to be {\em explicit} if each of the encoding functions $\phi_i$ can be computed in time polynomial in $i$.  
\end{definition}

It is known via the probabilistic method that for every finite alphabet $\Sigma$ and distance parameter $\delta\in (0,1)$, there exists an alphabet $\Gamma = \Gamma(\Sigma,\delta)$ and a tree code $\tc\colon\Sigma^* \to \Gamma^*$ with distance at least $\delta$ (see \cref{thm:prob_tc} for the precise statement when $\Sigma$ is a finite field). A long standing open problem is to construct explicit tree codes matching the above construction (or even one for some distance parameter $\delta \in (0,1)$). Since we do not yet know such explicit constructions, several intermediate variants and special cases of the above definition of tree codes have been studied, some of which are interesting of their own right. Below, we mention some of these variants.

\begin{description}

\item[Truncated Tree Code:] Instead of constructing a single tree code $\tc\colon\Sigma^* \to \Gamma^*$ that works for all input lengths, sometimes it is easier to construct a family of tree codes (indexed by $n$) each of which works for all strings up to a fixed length $n$, i.e., $\tc^{(n)}\colon\Sigma^{(\leq n)} \to \Gamma^{(\leq n)}$.  We will refer to these tree-codes as {\em $n$-truncated} tree codes and the family $\{\tc^{(n)}\}_{n\in \pintegers}$ as truncated tree codes. While dealing with $n$-truncated tree codes, we will use the superscript $n$ to denote the maximum input length. The notion of distance now becomes:
\[
\delta_{\tc^{(n)}}:= \inf_{i\leq n, x\neq x' \in \Sigma^i} \frac{\Delta(\tc(x),\tc(x'))}{i-\splt(x,x')}
\]
Note that for a truncated code $\{\tc^{(n)}\}_n$, we have a family of encoding functions $\{\phi^{(n)}_i : 1 \leq i \leq n\}$ for each $n$. In this case, we say that the truncated code is {\em explicit} if for each $n$ and $i \leq n$, $\phi^{(n)}_i$ is computable in time polynomial in $n$. Note that it is polynomial in $n$ and not just $i$ as in the case of (non-truncated) tree codes. Clearly, tree codes imply truncated tree codes but the other direction is not known.

\item[Input Alphabet:] The classical tree-code considers binary input alphabet, i.e., $\Sigma = \{0,1\}$. Sometimes, it is convenient to have an algebraic structure on the input alphabet, say $\Sigma$ is a finite field $\F$ or even the set of non-negative integers $\integers$. When dealing with truncated tree-codes $\tc^{(n)}$, one may also consider tree codes whose input and output alphabets grow with $n$ (i.e, $\{\Sigma_n\}_n$, $\{\Gamma_n\}_n$ corresponding to $\{\tc^{(n)}\}_n$). 

\item[Output Alphabet:] The holy grail is to construct tree codes over the binary alphabet with constant output alphabet size. As an intermediate step, one typically constructs tree codes whose output alphabets grow with $n$. This is best formalized in the setting of truncated tree codes $\{\tc^{(n)}\colon \Sigma^{(\leq n)} \to \Gamma_n^{(\leq n)} \}_{n \in \integers}$ where $\{\Gamma_n\}_n$ is a family of output alphabets whose size (may) grow with $n$.  The formalism for the more standard (single) tree code is a little more involved. We let $\Gamma = \cup_{n} \Gamma_n$ and $\tc\colon\Sigma^* \to \Gamma^*$ subject to the constraint that the encoding functions $\phi_i$ satisfy $\phi_i\colon \Sigma^i \to \Gamma_i$. Note $\Gamma$ is not necessarily a finite alphabet, though each $\Gamma_n$ is. 
\end{description}

We are now ready to state the remarkable recent result of Cohen, Haeupler and Schulman that constructs explicit binary tree codes over a polylogarithmic-sized output alphabet.
\footnote{The theorem as stated here is marginally stronger than the one proved by Cohen-Haeupler-Schulman, in the sense that it refers to explicit constructions of tree codes as opposed to truncated codes. See \cref{rem:honesttogodtreecode} for more details.}

\begin{restatable}[explicit tree codes with polylog alphabet~{\cite{CohenHS2018}}]{theorem}{binarytreecodearbitrarydistance}
\label{thm:binarytreecodearbitrarydistance}
For all $\eta \in [0,1)$ there exists a family of alphabets $\{\Gamma_n\}_{n}$ such that
$|\Gamma_n|= O_\eta(\polylog(n))$ with the following property. Let $\Gamma=
\cup_n \Gamma_n$. There exists a tree code $\tc\colon \{0,1\}^* \to
\Gamma^*$ with distance $\eta$ such that the encoding functions $\phi_i$ satisfy
satisfy $\phi_i\colon \{0,1\}^i \to \Gamma_i$. Furthermore, the encoding functions $\phi_n$ are
constructible in time polynomial in $n$.
\end{restatable}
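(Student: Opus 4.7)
The plan is to realize the tree code as an $\F_q$-linear map, as in both Pudl\'ak and Cohen--Haeupler--Schulman: set $\tc(x) = Tx$ where $T$ is an infinite lower-triangular matrix with entries in some $\F_q$ of size $\polylog(n)$. The lower-triangularity of $T$ immediately gives online encoding, since $(Tx)_i$ depends only on $x_1,\ldots,x_i$. By linearity, the distance condition in \cref{defn:TreeCode} becomes: for every $n$, every $0 \le s < n$, and every nonzero $v \in \{0,1\}^{n-s}$ (identified with a vector in $\F_q^{n-s}$), the vector $T_{[s+1,n],\,[s+1,n]}\, v$ has Hamming weight at least $\eta(n-s)$. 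In other words, every contiguous diagonal block of $T$ should be the generator matrix of a linear code of relative distance $\geq \eta$ on binary inputs, and it suffices to verify this \emph{block-minimum-distance} condition to conclude the full tree code property.

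The core algebraic task is therefore to exhibit an explicit, lower-triangular $T$ over a polylog-sized field such that every one of its (infinitely many) contiguous diagonal blocks has relative distance $\geq \eta$. Pudl\'ak's construction achieves the MDS version of this---every contiguous diagonal block is non-singular, i.e.\ $T$ is totally non-singular---but this is overkill: MDS forces $|\F_q| = \Omega(n)$, yielding only a polynomial-sized alphabet. The Cohen--Haeupler--Schulman insight is that for $\eta < 1$ one may drop total non-singularity and instead demand only the weaker property that no nonzero $\{0,1\}$-vector is mapped by any contiguous diagonal block to a vector of relative Hamming weight $< \eta$. The plan is to construct $T$ satisfying this weaker property and to present both constructions in a unified way via a common family of entry recipes (for example, Vandermonde- or character-sum-type formulas in an algebraic parameter whose degree and placement govern the trade-off between field size and block distance).

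To upgrade the Cohen--Haeupler--Schulman $n$-truncated construction to a genuine (non-truncated) tree code as \cref{thm:binarytreecodearbitrarydistance} demands, I would insist that the entries $T_{ij}$ be given by a \emph{universal} rule depending only on $i,j$ and not on the truncation parameter $n$, so that the truncated matrices $\{T^{(n)}\}$ nest consistently into a single infinite matrix $T$. Under such a rule, the output at position $i$ lies in $\F_{q_i}$ with $q_i = O_\eta(\polylog(i))$, matching the statement with $\Gamma_i = \F_{q_i}$. The explicitness requirement then reduces to polynomial-time evaluation of $T_{ij}$ for $j \leq i \leq n$, which is routine for algebraic constructions of this form.

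The main obstacle is the algebraic step in the second paragraph: explicitly writing down $T$ over a polylog-sized field such that all contiguous diagonal blocks simultaneously carry minimum relative distance $\eta$ on binary inputs. This is where the Pudl\'ak/CHS ingenuity concentrates and where the derandomization of the probabilistic existence argument actually happens. Once such a $T$ is in hand, the linearization, the block-distance-to-tree-distance reduction, the nesting across $n$, and the alphabet accounting are all routine bookkeeping; the value of the present note is to factor these routine steps out cleanly so that the two constructions appear as instances of a single algebraic recipe with different choices of parameters.
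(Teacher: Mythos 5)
Your proposal correctly reduces the tree-code distance condition (for a linear code over $\F_q$ restricted to $\{0,1\}$ inputs) to a statement about contiguous diagonal blocks of a single lower-triangular matrix $T$, and it correctly flags that the entire difficulty would then concentrate in exhibiting such a $T$ over a polylog-sized field. But at that point the proposal stops: you do not actually construct $T$, you only assert that this is where ``the Pudl\'{a}k/CHS ingenuity concentrates.'' That assertion is where the proposal goes wrong, and the gap is not one that can be filled by routine bookkeeping.

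The paper's tree code is \emph{not} an $\F_q$-linear map given by a single infinite matrix over a polylog-sized field. Pudl\'ak's matrix construction is carried out over the \emph{integers}: the Pascal matrix is totally non-singular, but its entries in an $n\times n$ block grow as large as $2^n$, so the resulting $\tcz$ (\cref{thm:inttreecode}) has an output alphabet of size $2^{\Theta(n)}$, far from polylogarithmic. Shrinking the alphabet is not achieved by choosing a cleverer matrix; it is a separate, genuinely combinatorial alphabet-reduction pipeline: truncate the integer code to bounded inputs (\cref{cor:largealph}), compose with a standard error-correcting code to get a \emph{truncated} code with $\ell$-lagged distance (\cref{lemma:truncated-laggedTreeCode}), interleave two offset copies to get an \emph{untruncated} code with $(\ell, O(\ell^2))$-lagged distance (\cref{lemma:untruncated-laggedTreeCode}), and finally superimpose $O(\log\log n)$ such lagged codes at doubly-exponentially growing scales $\ell_1,\ell_2,\dots$ so that every possible split length falls in some code's window (\cref{thm:binarytreecode}). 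The output alphabet $\Gamma_i$ is the product of $O(\log\log i)$ constant-sized alphabets, whence $|\Gamma_i| = \polylog(i)$; the resulting map is a concatenation of nonlinear pieces, not a single linear code. Moreover, the distance $\eta$ close to $1$ is obtained by starting from the rate-$s/(r+s)$ MDS construction (\cref{thm:MDScodes_r/(r+s)}) rather than the rate-$1/2$ one, another step your plan omits.

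Finally, note that the object you are asking for --- an explicit lower-triangular $T$ over a polylog-sized (indeed, even polynomial-sized or quasi-polynomial-sized) field whose contiguous diagonal blocks all have good distance --- is essentially the open problem the paper ends with. If you had such a $T$, the multi-step alphabet reduction would indeed collapse to near-trivial bookkeeping, which is precisely why that construction is not known. So the proposal identifies a clean sufficient condition, but it mistakes the open problem for the theorem being proved; the paper's actual content lives in the alphabet-reduction machinery that your writeup dismisses as routine.
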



Cohen, Haeupler and Schulam construct such tree codes in two steps. In the first step, they construct  tree codes over the integer alphabet using Newton polynomials. This construction is a special case of Pudl\'ak's construction of "maximum-distance-separable (MDS)" tree codes using totally-non-singular triangular matrices. In the second step, they perform an alphabet reduction to reduce the alphabet from integers to a polylogarithmic-sized output alphabet. 

\paragraph{Organization:} The rest of this exposition is organized as follows. We give a short (but complete) exposition of MDS codes, a la Pudl\'{a}k, in \cref{sec:mds} followed by the alphabet reduction technique of Cohen-Haeupler-Schulman in \cref{sec:ared}.

The purpose of this note is only to give an exposition of the Cohen-Haeupler-Schulman construction. In particular, we place the Cohen-Haeupler-Schulman construction in the context of Pudl\'{a}k's framework for constructing explicit tree codes. For a more comprehensive treatment of tree codes, we refer the interested reader to Gelles's survey on interactive communication~\cite{Gelles2017} and Pudl\'ak's excellent paper~\cite{Pudlak2016} discussing various potential approaches to obtain explicit tree codes.

\subsection{Linear tree codes}\label{sec:linear}

A particularly appealing setting is when the input and output alphabet $\Sigma,\Gamma$ are finite-dimensional vector spaces $\F^s, \F^r$ respectively over some finite field $\F$. In this setting, if the tree code $\tc\colon (\F^s)^* \to (\F^r)^*$ further satisfies that $\tc|_{(\F^s)^n}$ is a linear mapping for each $n$, then the tree code is said to {\em $\F$-linear}. 

For the case when $s =1$, it is easy to check that the online encoding property of tree codes ensures that if $\tc\colon\F^* \to (\F^r)^*$ is $\F$-linear, then there exist $r$-lower triangular matrices $A_0, \dots, A_{r-1} \in \F^{\pintegers \times \pintegers}$ such that for any $x \in \F^n$, we have $\tc(x)_i = ((A^{(n)}_0(x))_i, (A^{(n)}_1(x))_i, \dots, (A^{(n)}_{r-1}(x))_i) \in \F^r$ where $A^{(n)}$ refers to the $n\times n$-matrix formed by the restricting $A$ to the top $n$ columns and rows.  By a change of basis of the input strings, we may without loss of generality assume that $A_0 = I$, the identity matrix (we will not use this property, however our construction of distance-half linear tree codes will have this property). 

As in linear codes, the distance of a linear tree code can be characterized by the minimum weight of non-zero codewords. More specifically, if $\tc\colon\F^* \to (\F^r)^*$ is $\F$-linear then 
\[ \delta_{\tc} = \inf_{n, x \in \F^n \setminus \{{0^n}\}} \frac{\wt_{\F^r}(\tc(x))}{n-\splt(x,{0^n})},\]
where $\wt_\Gamma(x)$ refers to the Hamming weight of $x$ with respect to the alphabet $\Gamma$. Since $\Gamma= \F^r$ and $\tc(x) \in (\F^r)^n = \F^{rn}$, it is also natural to consider a variation of the above definition with Hamming weight with respect to the finite field alphabet $\F$ instead.
\[ \widetilde{\delta}_{\tc} := \inf_{n, x \in \F^n \setminus \{{0^n}\}} \frac{\wt_{\F}(\tc(x))}{r(n-\splt(x,{0^n}))}.\]
Clearly, $\widetilde{\delta}_\tc \leq \delta_\tc$. 

Schulman proved the existence of tree codes via the probabilistic method. The following statement is due to Pudl\'ak~\cite[Theorem2.1]{Pudlak2016}~\footnote{Pudl\'ak provides two probabilistic constructions, one for truncated codes and another for untruncated codes. The theorem stated here refers to the latter construction using cyclic Toeplitz matrices.}.
\begin{theorem}[probabilistic construction of linear tree codes{~\cite{Schulman1996, Pudlak2016}}]
\label{thm:prob_tc}
  Let $q=|\F|$, $r=q^d=|\Sigma|$ and $0<\delta<\frac{r-1}r$
  such that  
\begin{align*}
    \log_r(2q)+H_r(\delta)\leq 1 
\end{align*}
where $H_r$ denotes the $r$-entropy function defined by $$H_r(x)=x\log_r(r-1)-x\log_rx-(1-x)\log_r(1-x).$$

  Then, there exists a linear tree code $\tc:\F^{*}\to\Sigma^{*}$ with $\widetilde{\delta}_\tc >
  \delta$. Moreover, if $q,r$ and $\delta$ are fixed, then the encoding functions $\phi_n$
  can be constructed for every $n$ in time $2^{O(n)}$.
\end{theorem}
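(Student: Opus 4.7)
The plan is a standard probabilistic argument in which a single random draw produces a linear tree code good for every input length $n$ simultaneously. As spelled out in Section 1.1, any $\F$-linear tree code $\tc\colon\F^*\to\Sigma^*$ with $\Sigma=\F^d$ (where $d=\log_q r$) corresponds to an infinite block lower-triangular matrix over $\F$, and the online-encoding property is equivalent to lower-triangularity. To handle all $n$ with one draw, I would specialize this matrix to \emph{Toeplitz} form: sample $c_1,c_2,\ldots\in\Sigma$ independently and uniformly, and set
\[
\tc(x)_i \;:=\; \sum_{j=1}^{i} c_{i-j+1}\, x_j \;\in\; \Sigma \qquad (i\geq 1).
\]
Lower-triangularity and $\F$-linearity are immediate.

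The Toeplitz structure buys \emph{shift-invariance} of the distance analysis. For any $x\in\F^n$ with $\splt(x,0^n)=s$ and $\hat x := (x_{s+1},\ldots,x_n)$, a direct calculation gives $\tc(x)_{s+k}=\tc(\hat x)_k$, so the entire family of distance constraints collapses to constraints indexed by pairs $(n',\hat x)$ with $n'\geq 1$, $\hat x\in\F^{n'}$, and $\hat x_1\neq 0$; in particular no separate union bound over the split position is required. A second observation: for such $\hat x$, the map $(c_1,\ldots,c_{n'})\mapsto(\tc(\hat x)_1,\ldots,\tc(\hat x)_{n'})$ is lower-triangular over $\F$ with the invertible scalar $\hat x_1\in\F^\times$ on the diagonal, so its output is uniformly distributed in $\Sigma^{n'}$. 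The standard volume estimate then yields
\[
\Pr\!\left[\wt_\Sigma(\tc(\hat x))\leq \delta n'\right] \;\leq\; r^{n'(H_r(\delta)-1)}.
\]

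Union-bounding over $\hat x$ (at most $q^{n'}$ choices) and over $n'$ gives total failure probability at most
\[
\sum_{n'\geq 1} q^{n'}\cdot r^{n'(H_r(\delta)-1)} \;=\; \sum_{n'\geq 1} r^{n'(\log_r q + H_r(\delta)-1)},
\]
and the hypothesis $\log_r(2q)+H_r(\delta)\leq 1$ is exactly what makes this a convergent geometric series with ratio at most $\tfrac12$; any further slack (strict inequality in the hypothesis, or the looseness of the volume estimate) pushes the sum strictly below one, so a valid sequence $(c_i)$ exists.

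For the $2^{O(n)}$ constructibility, I would derandomize by the method of conditional expectations: choose $c_1,\ldots,c_n$ one at a time, at each step picking the value in $\Sigma$ that keeps the conditional expected number of violated length-$\leq n$ distance constraints strictly below one. Evaluating the conditional expectation requires summing over at most $\sum_{n'\leq n}q^{n'}=2^{O(n)}$ bad events, each handled in time polynomial in $n$. The main obstacle (or at least the point requiring the most care) is aligning the Toeplitz structure with the union-bound bookkeeping: shift-invariance is precisely what removes the split-position union bound, and the decomposition $\log_r(2q)=\log_r q + \log_r 2$ in the hypothesis matches the separate contributions from the union over the $q^{n'}$ bad $\hat x$'s and from the geometric-series tail in $n'$.
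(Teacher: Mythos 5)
The paper does not give its own proof of this statement; it cites Pudl\'ak~\cite{Pudlak2016} (Theorem~2.1), and the accompanying footnote says that the untruncated construction there uses \emph{cyclic} Toeplitz matrices. So your Toeplitz ansatz, the use of shift-invariance to collapse the union bound from pairs (length, split) to a single index $n'$, and the volume-bound-plus-geometric-series calculation for which the hypothesis $\log_r(2q)+H_r(\delta)\le 1$ is tailored, are all in the right family. To that extent your plan is aligned with the cited argument.

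However there is a real mismatch between what you bound and what the theorem asserts. You estimate $\Pr[\wt_\Sigma(\tc(\hat x))\le\delta n']\le r^{n'(H_r(\delta)-1)}$, i.e.\ you count nonzero $\Sigma$-symbols; this yields $\delta_\tc>\delta$. The theorem claims $\widetilde{\delta}_\tc>\delta$, which by the definition in Section~1.1 counts nonzero $\F$-\emph{coordinates} of $\tc(x)\in(\F^d)^{n'}\cong\F^{dn'}$ and normalizes by $d(n-\splt)$. As the paper itself notes, $\widetilde{\delta}_\tc\le\delta_\tc$ (a nonzero $\Sigma$-symbol may contribute just one nonzero $\F$-coordinate out of $d$), so what you prove is strictly weaker. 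Controlling $\wt_\F$ requires the Hamming-ball volume bound in $\F^{dn'}$, which gives $q^{dn'(H_q(\delta)-1)}=r^{n'(H_q(\delta)-1)}$ with $H_q$ in place of $H_r$ and, crucially, needs $\delta<(q-1)/q$ rather than $\delta<(r-1)/r$. This is more than bookkeeping: for $q=2$ and large $d$ the stated hypothesis with $H_r$ admits $\delta$ well above $1/2$, yet a uniformly random string over $\F_2$ has expected $\F$-weight fraction exactly $1/2$, so $\widetilde{\delta}_\tc>\delta$ is impossible in that regime. Your calculation is internally consistent for the $\wt_\Sigma$/$H_r$ pairing, but it does not deliver the $\widetilde{\delta}_\tc$ conclusion as written; the note's hypothesis (with $H_r$) and its conclusion (with $\widetilde{\delta}_\tc$) appear not to match each other, and you should decide which version you are proving and say so.

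A second, smaller gap is in the derandomization. You propose to choose $c_n$ so as to keep the conditional expected number of violated constraints \emph{of length at most $n$} below one. That is not sufficient: a prefix $(c_1,\dots,c_n)$ can pass every length-$\le n$ test yet admit no infinite continuation avoiding all longer violations. The method of conditional expectations needs a computable pessimistic estimator for the \emph{entire} infinite sum, including all $n'>n$; the triangular structure does let you express the conditional tail probabilities as Hamming-ball volumes in cosets of $\Sigma^{n'-n}$, and one then truncates the sum once the geometric tail is provably below the remaining budget, but none of this is automatic and it needs to be set up explicitly. This is plausibly where Pudl\'ak's cyclic structure does work that a plain Toeplitz sequence leaves to you. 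Relatedly, the strict inequality $\widetilde{\delta}_\tc>\delta$ (an infimum over infinitely many ratios) does not follow merely from each ratio exceeding $\delta$; you need a uniform $\delta+\eps$ bound, obtained by running the same argument at a slightly larger target distance, and this requires spelling out the slack you invoke only informally.
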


Note that there exists $\delta>0$ such that for every $q>2$, there
exist tree codes with rate $1/2$ (i.e., $d=2$) and minimum relative
distance $\geq\delta$. We do not know if binary (i.e., $q=2$) tree codes
with rate $1/2$ can have asymptotically positive minimum
relative distance.  

\section{MDS tree codes, a la Pudl\'{a}k}
\label{sec:mds}

In this section, we present the treatment of linear tree codes following Pudl\'ak~\cite{Pudlak2016}. More precisely, we present the ``Singleton bound'' for linear codes, define maximum-distance-separable (MDS) tree codes and present a construction of MDS codes assuming the existence of totally-non-singular triangular matrices.

\begin{proposition}[Singleton bound for tree codes~{\cite[Proposition~5.1]{Pudlak2016}}]\label{prop:singletonbound}
For every $n$-truncated code, $\tc^{(n)} \colon \Sigma^{(\leq n)} \to \Gamma^{(\leq n)}$ we have $\delta_{\tc^{(n)}} \leq \frac1n\left\lfloor n\left(1-\nicefrac{\log|\Sigma|}{\log|\Gamma|}\right)+1\right\rfloor$. 
\end{proposition}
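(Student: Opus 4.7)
The plan is to adapt the classical pigeonhole argument underlying the Singleton bound to the tree-code setting. Let $\alpha = \log|\Sigma|/\log|\Gamma|$ and set $k = \lfloor n(1-\alpha)+1\rfloor$. First, I would restrict attention to inputs of length exactly $n$: the set $C = \{\tc^{(n)}(x) : x \in \Sigma^n\} \subseteq \Gamma^n$ is a block code with $|\Sigma|^n$ codewords of length $n$ over an alphabet of size $|\Gamma|$. The standard Singleton bound (proved via projection onto $n-k$ coordinates followed by pigeonhole) yields distinct $x, x' \in \Sigma^n$ with $\Delta(\tc(x),\tc(x')) \le k$.

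The crux is to produce such a pair with $\splt(x,x') = 0$. Once we have it, the tree-code distance definition immediately gives $\delta_{\tc^{(n)}} \le \Delta(\tc(x),\tc(x'))/(n - \splt(x,x')) = \Delta/n \le k/n$, which is the claimed bound. To arrange this I would refine the pigeonhole: consider the projection $\psi \colon \Sigma^n \to \Gamma^{n-k}$ sending $x$ to the last $n - k$ coordinates of $\tc^{(n)}(x)$. Since $|\Sigma|^n > |\Gamma|^{n-k}$ by the choice of $k$, the map $\psi$ is not injective. Crucially, restricted to any single first-character class $\{x \in \Sigma^n : x_1 = c\}$ of size $|\Sigma|^{n-1}$, a simple cardinality check shows that $\psi$ can remain injective in the favorable parameter regime (namely when $|\Sigma|^{n-1} \le |\Gamma|^{n-k}$), so the overall non-injectivity must arise from a collision between two classes, yielding exactly the desired pair with $x_1 \ne x_1'$ and hence $\splt(x,x') = 0$.

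The main obstacle I anticipate is handling the edge cases where the cardinality inequality $|\Sigma|^{n-1} \le |\Gamma|^{n-k}$ fails for the minimal $k$ forcing overall non-injectivity: then $\psi$ need not be injective on each first-character class, and within-class collisions (which correspond to pairs with $\splt \ge 1$ and hence give only the weaker bound $\delta \le k/(n-\splt)$) must be excluded by other means. Such edge cases can plausibly be resolved either by iterating the argument on a shifted sub-tree code of length $n - s$ obtained by fixing a common prefix of length $s$ (this sub-code is itself a tree code and inherits distance at least $\delta_{\tc^{(n)}}$), or by combining distance constraints from multiple output positions in a Plotkin-style refinement of the Singleton argument.
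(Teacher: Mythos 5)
Your overall strategy (project onto the last $\ell = n-k$ output coordinates and apply pigeonhole) matches the paper's, but the step you call the crux --- forcing the colliding pair to satisfy $\splt(x,x')=0$ --- is where the proof breaks, and it is also unnecessary. First, the class-counting argument is not valid even in your ``favorable regime'': the inequality $|\Sigma|^{n-1}\le|\Gamma|^{n-k}$ only shows that $\psi$ \emph{could} be injective on each first-character class, not that it \emph{is}; all collisions might still occur inside a single class, so you cannot conclude that some collision crosses classes. Second, the favorable regime fails in the most natural settings rather than only in edge cases: for $|\Gamma|=|\Sigma|^2$ and $n$ even one gets $k=n/2+1$ and $n-k=n/2-1$, and the required inequality $|\Sigma|^{n-1}\le|\Gamma|^{n/2-1}=|\Sigma|^{n-2}$ is false. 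Third, your fallback of passing to the sub-tree code below a common prefix of length $s$ yields the Singleton bound for length $n-s$, namely $\frac{1}{n-s}\left\lfloor (n-s)(1-\alpha)+1\right\rfloor$ with $\alpha=\nicefrac{\log|\Sigma|}{\log|\Gamma|}$, which is in general \emph{larger} than $\frac1n\left\lfloor n(1-\alpha)+1\right\rfloor$ (the quantity is not monotone in the length), so it does not recover the claimed bound.

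The missing idea is that a collision pair with $\splt(x,x')=s>0$ already witnesses the bound, because the \emph{online encoding} property forces $\tc^{(n)}(x)$ and $\tc^{(n)}(x')$ to agree on the first $s$ output coordinates in addition to the last $\ell$ ones. Hence $\Delta(\tc^{(n)}(x),\tc^{(n)}(x'))\le n-s-\ell$, and
\[
\frac{\Delta(\tc^{(n)}(x),\tc^{(n)}(x'))}{n-s}\;\le\;\frac{n-s-\ell}{n-s}\;=\;1-\frac{\ell}{n-s}\;\le\;1-\frac{\ell}{n}\;=\;\frac{k}{n}.
\]
This is exactly how the paper's proof proceeds: no case analysis on $s$ and no engineering of a split-$0$ pair is needed.
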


\begin{proof}
This proof will be an adaptation of the standard proof of the Singleton bound for codes to the tree code setting. The proposition is trivially true when $|\Gamma| \geq |\Sigma^n|$ (in fact, in this case, there exist $n$-truncated tree codes with distance 1). So we might as well assume that $|\Gamma| < |\Sigma^n|$. Let $\ell$ be the smallest integer such that $\ell \geq  n\cdot(\nicefrac{\log |\Sigma|}{\log|\Gamma|})- 1 $, which is at least 1 since $n\cdot(\nicefrac{\log |\Sigma|}{\log|\Gamma|}) >1$. Consider the mapping $f_{n\to \ell}\colon \Sigma^n \to \Gamma^\ell$ defined as follows $f_{n \to \ell}(x) := \tc^{(n)}(x)|_{[n-\ell+1,n]}$ (i.e, the tree code encoding restricted to the last $\ell$ symbols). Since $|\Sigma^n| > |\Gamma^\ell|$, there must exist two strings $x,x' \in \Sigma^n$ such that $f_{n \to \ell}(x)= f_{n\to\ell}(x')$ (i.e., the tree code encodings of $x$ and $x'$ agree on the last $\ell$ symbols).
 Let $i\geq 1$ be the first place where $x$ and $x'$ differ. Note that $i \leq n - \ell$. Due to the online nature of tree codes we know that $\tc^{(n)}(x)$ and $\tc^{(n)}(x')$ agree on the first $i-1$ coordinates (in addition to the last $\ell$ coordinates).  
 Therefore, 
 \[
 \delta_{\tc^{(n)}} \leq \frac{n-(\ell+(i-1))}{n-i+1} =1-\frac{\ell}{n-i+1}\leq 1-\frac{\ell}{n} = 1-\frac1n\left(\left\lceil n \cdot \frac{\log|\Sigma|}{\log|\Gamma|} \right\rceil -1\right) = \frac1n\left\lfloor n\left(1-\frac{\log|\Sigma|}{\log|\Gamma|}\right)+1\right\rfloor.
\]
\end{proof}
Notice that the aforementioned proof would have also worked with the following relaxed notion of distance (that only considers strings of length $n$):
\[
\delta_{\tc^{(n)}} ^\text{ relaxed}:= \inf_{x\neq x' \in \Sigma^n} \frac{\Delta(\tc(x),\tc(x'))}{n-\splt(x,x')}.
\]

For any non-zero $y \in \Gamma^n$ and $\alpha \in (0,1)$, the condition $\wt(y) \geq \left\lfloor \alpha \cdot n+1\right\rfloor$ is equivalent to the condition that $\wt(y) > \alpha \cdot n$. This combined with  \cref{prop:singletonbound} motivates the following definition. 

\begin{definition}[MDS tree codes]\label{defn:MDSTreeCodes}
A $n$-truncated tree code $\tc^{(n)} \colon \Sigma^{(\leq n)} \to \Gamma^{(\leq n)}$ is said to be maximum-distance separable (MDS) if $\delta_{\tc^{(n)}} > 1- \nicefrac{\log |\Sigma|}{\log |\Gamma|}$. An MDS tree code is defined similarly. 
\end{definition}

Pudl\'ak showed how the existence of certain type of matrices implies the existence of MDS codes. We first define these special type of matrices. In the following, for a $n\times n$ matrix $A$ and subsets $I, J \subseteq [n]$ of the rows and columns, the matrix $A[I|J]$ refers to the submatrix of $A$ restricted to the rows $I$ and columns $J$. 

\begin{definition}[totally-non-singular triangular matrix]\label{defn:TraingularTotallyNonSingular}
A lower-triangular matrix $A \in \F^{n \times n}$ is said to be totally-non-singular if for every $1 \leq r \leq n$ and pair of sets $I, J \subseteq [n]$ such that $|I| = |J| =r $ and $I = \{1\leq i_1 < \cdots i_r \leq n\}$, $J = \{1\leq j_1 < \cdots j_r \leq n\}$ satisfying $i_s \geq j_s$ for all $s \in [r]$, we have that the submatrix $A[I|J]$ is non-singular.
\end{definition}

Given a lower-triangular matrix $A\in \F^{n\times n}$, consider the linear tree code $\tc^{(n)}_A\colon \F^{(\leq n)} \to (\F^2)^{\leq n}$ defined by the pair of generator matrices $(A_0 = I_n, A_1 = A)$. In other words, for any $x \in \F^k$ and $i \leq k$, we have 
\[\left(\tc^{(n)}_A(x)\right)_i := \left(x_i, \left(A^{(k)}(x)\right)_i\right).\]

\begin{theorem}[MDS linear tree codes~{\cite[Theorem~5.3]{Pudlak2016}}]
\label{thm:lowertiandgularcode}
If $A \in \F^{n\times n}$ is a totally-non-singular lower-triangular matrix, then the $n$-truncated tree code $\tc^{(n)}_A$ has distance greater than $\nicefrac12$ and is hence MDS.
\end{theorem}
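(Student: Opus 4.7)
The plan is to exploit the linearity of $\tc^{(n)}_A$: fix an arbitrary nonzero input $x \in \F^k$ with $k \leq n$, let $s_0 := \splt(x, 0^k)$, and write $N := k - s_0$. Set $I := \mathrm{supp}(x) \cap [s_0+1, k] = \{i_1 < \cdots < i_a\}$ and $J := \mathrm{supp}(Ax) \cap [s_0+1, k]$. Because $A$ is lower-triangular and $x$ vanishes on $[1, s_0]$, so does $Ax$, so the $\F^2$-Hamming weight of $\tc^{(n)}_A(x)$ equals $|I \cup J|$. The goal is therefore $|I \cup J| > N/2$, and the entire argument will funnel through the single claim $|J^c| \leq a - 1$, where $J^c := [s_0+1, k] \setminus J$ is the set of coordinates (beyond the split) on which $Ax$ vanishes.

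Granting the claim $|J^c| \leq a-1$, one gets $|I \cup J| \geq |J| = N - |J^c| > N - a$ and $|I \cup J| \geq |I| = a$; combining these two bounds yields $|I \cup J| > \max(a, N-a) \geq N/2$, with strictness in the borderline case $a = N/2$ supplied by the first bound $|I \cup J| > N - a$.

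To establish $|J^c| \leq a - 1$, I plan to prove by strong induction the sharper statement: for every $v \in \{1, \ldots, a\}$, $|J^c \cap [s_0+1, i_v)| \leq v - 1$. Suppose for contradiction that $J^c \cap [s_0+1, i_v)$ has $v$ elements $\tau_1 < \cdots < \tau_v$. For each $\tau_s$ the equation $(Ax)_{\tau_s} = 0$ only involves the variables $x_{i_1}, \ldots, x_{i_{v-1}}$ (since $\tau_s < i_v$ and $A$ is lower-triangular), so the $v \times (v-1)$ matrix $M := A[\{\tau_1, \ldots, \tau_v\} \mid \{i_1, \ldots, i_{v-1}\}]$ sends the entrywise-nonzero vector $(x_{i_1}, \ldots, x_{i_{v-1}})^\top$ to $0$, forcing $\mathrm{rank}(M) \leq v-2$. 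In particular the $(v-1)\times(v-1)$ submatrix $M' := A[\{\tau_2, \ldots, \tau_v\} \mid \{i_1, \ldots, i_{v-1}\}]$ is singular. On the other hand, the induction hypothesis applied at $u = 1, \ldots, v-1$ yields $\tau_u \geq i_u$ (since at most $u-1$ elements of $J^c$ lie below $i_u$), hence $\tau_{u+1} > \tau_u \geq i_u$; thus $M'$ satisfies the row-over-column condition of \cref{defn:TraingularTotallyNonSingular} and must be non-singular---a contradiction. The very same idea closes the overall argument: if $|J^c| \geq a$ and $\tau_1 < \cdots < \tau_a$ are its first $a$ elements, then the claim gives $\tau_s \geq i_s$ for every $s$, so by total non-singularity $A[\{\tau_1,\ldots,\tau_a\} \mid I]$ is non-singular, contradicting $A[\{\tau_1,\ldots,\tau_a\} \mid I] \cdot x_I = 0$ with $x_I$ nowhere zero.

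The main obstacle is the index bookkeeping at each stage of the induction: rank-deficiency of $M$ only guarantees that \emph{some} $(v-1) \times (v-1)$ submatrix is singular, but to manufacture a contradiction via \cref{defn:TraingularTotallyNonSingular} one needs a specific submatrix whose row indices dominate its column indices. The inductive shift from rows $(\tau_1, \ldots, \tau_{v-1})$ to $(\tau_2, \ldots, \tau_v)$ is precisely what makes this work, and the induction hypothesis at smaller $v$ is what lets one assert $\tau_{u+1} \geq i_u$ in this shifted indexing.
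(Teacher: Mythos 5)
Your top-level strategy coincides with the paper's: by linearity the distance is controlled by the minimum weight of a nonzero codeword, the weight beyond the split splits into the support of $x$ and the support of $Ax$, and everything funnels into the single counting claim that $Ax$ has strictly fewer zeros on $[s_0+1,k]$ than $x$ has nonzeros there --- your $|J^c|\leq a-1$ is exactly \cref{clm:cgeqr} ($|\mathcal{C}_x|>|\mathcal{R}_x|$). Where you genuinely diverge is in proving that claim. The paper assumes $|J^c|\geq a$, takes the \emph{maximal} $t$ with $\tau_u\geq i_u$ for all $u\leq t$, and notes that maximality (either $t=a$ or $\tau_{t+1}<i_{t+1}$) already forces the square submatrix $A[\{\tau_1,\dots,\tau_t\}\mid\{i_1,\dots,i_t\}]$ to annihilate the entrywise-nonzero vector $x|_{\{i_1,\dots,i_t\}}$ --- one application of total non-singularity finishes it. You instead prove the sharper prefix bound $|J^c\cap[s_0+1,i_v)|\leq v-1$ by induction, via a rank argument on a $v\times(v-1)$ rectangular submatrix and a row-shift to extract a square singular minor with the row-dominance property. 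Both are valid; yours is longer but extracts slightly more refined structural information, while the paper's maximality trick is the more economical route to the same contradiction.

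There is one genuine, though easily patched, gap: the base case $v=1$ is not established by your argument. For $v=1$ the matrix $M$ is $1\times 0$ and the vector $(x_{i_1},\dots,x_{i_{v-1}})$ is empty, so ``$M$ kills an entrywise-nonzero vector'' yields no rank deficiency, and $M'$ is the $0\times 0$ matrix, which is non-singular --- no contradiction. Indeed no contradiction is available from the matrix structure here: every $\tau\in[s_0+1,i_1)$ automatically satisfies $(Ax)_\tau=0$ and hence lies in $J^c$, so $|J^c\cap[s_0+1,i_1)|=i_1-s_0-1$. The statement at $v=1$ holds only because $i_1=s_0+1$ by the definition of $\splt$ (the first nonzero coordinate of $x$ sits immediately after the split), which makes the interval empty. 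You must state this explicitly: it is needed both to start the induction and because the step at $v=2$ relies on $\tau_1\geq i_1$. With that one observation added, the proof is complete.
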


\begin{proof}
We will in fact show the stronger property that $\widetilde{\delta}  > \nicefrac{1}{2}$. More precisely, for any non-zero codeword $x\in \F^k$ with $\ell=\splt(0^k,x)\in [0,k-1]$, we need to show that the $\F$-weight of $y = \tc(x) \in (\F^2)^k$ when viewed as a string in $\F^{2k}$ is greater than $(k-\ell)$. To this end, we define the following two sets of column and row indices. 
 \[\mathcal{C}_x :=\{j\in [k]\colon x_j\neq 0\}, \qquad \mathcal{R}_x :=\{i\in [k]\colon i> \ell, (A^{(k)}(x))_i=0\}.\] 

 The number of non-zero coordinates in $y$ (when viewed a string over $\F$) is exactly $|\mathcal{C}_x| + (k-\ell-|\mathcal{R}_x|)$ which is at strictly greater than $(k-\ell)$ if $|\mathcal{C}_x| > |\mathcal{R}_x|$, which is proved in the claim below.
 \end{proof}
 \begin{claim}\label{clm:cgeqr} $|\mathcal{C}_x| > |\mathcal{R}_x|$\end{claim}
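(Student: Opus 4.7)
My plan is to argue by contradiction: assume $|\mathcal{C}_x| \le |\mathcal{R}_x|$ and write $\mathcal{C}_x = \{j_1 < \cdots < j_c\}$ (with $j_1 = \ell+1$, by definition of $\ell$) and $\mathcal{R}_x = \{i_1 < \cdots < i_r\}$, $r \ge c$. The goal is to exhibit a $c \times c$ submatrix of $A^{(k)}$ that is simultaneously non-singular by \cref{defn:TraingularTotallyNonSingular} and annihilates the nonzero vector $x|_{\mathcal{C}_x}$. For the annihilation, note that for every $i \in \mathcal{R}_x$ the equation $(A^{(k)} x)_i = 0$ reduces, via lower-triangularity of $A$ and the fact that $x$ is supported on $\mathcal{C}_x$, to $\sum_{j \in \mathcal{C}_x} A_{ij}\, x_j = 0$. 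Therefore, for any $I' = \{i'_1 < \cdots < i'_c\} \subseteq \mathcal{R}_x$ we have $A^{(k)}[I'\,|\,\mathcal{C}_x]\, x|_{\mathcal{C}_x} = 0$, and \cref{defn:TraingularTotallyNonSingular} forces this submatrix to be non-singular exactly when $i'_s \ge j_s$ for every $s \in [c]$, yielding the contradiction $x|_{\mathcal{C}_x} = 0$.

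The task thus reduces to finding such an $I'$; I would take $I' = \{i_1, \ldots, i_c\}$ (the first $c$ elements of $\mathcal{R}_x$) and prove $i_s \ge j_s$ via the following standalone lemma, to be established by induction on ambient dimension: \emph{for every totally-non-singular lower-triangular matrix $B \in \F^{N \times N}$ and every $u \in \F^N$ with $u_1 \ne 0$, one has $|\{j : u_j \ne 0\}| > |\{i : (Bu)_i = 0\}|$.} \cref{clm:cgeqr} then follows by applying this to $B := A^{(k)}[\{\ell+1,\ldots,k\}\,|\,\{\ell+1,\ldots,k\}]$ (lower-triangular and totally-non-singular as a principal submatrix of $A^{(k)}$) and $u := x|_{[\ell+1,k]}$ (whose first coordinate $x_{\ell+1}$ is nonzero).

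The base case $N = 1$ of the lemma is immediate, since $(Bu)_1 = B_{11}\, u_1 \ne 0$ (the diagonal entries of a totally-non-singular matrix are nonzero). For the inductive step I would keep the notation $\{j_1 < \cdots < j_c\}$ and $\{i_1 < \cdots < i_r\}$ for the nonzero positions of $u$ and the zero positions of $Bu$, assume $c \le r$ for contradiction, and suppose some $s \in [c]$ satisfies $i_s < j_s$. Then $i_1, \ldots, i_s \in [1, j_s - 1]$, so the number of zeros of $Bu$ in $[1, j_s-1]$ is at least $s$. On the other hand, lower-triangularity of $B$ gives $(Bu)|_{[1,j_s-1]} = B' u'$ where $B' := B[\{1,\ldots,j_s-1\}\,|\,\{1,\ldots,j_s-1\}]$ is again totally-non-singular lower-triangular and $u' := u|_{[1,j_s-1]}$ satisfies $u'_1 \ne 0$ and $|\{j : u'_j \ne 0\}| = s - 1$. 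Applying the inductive hypothesis to $B'$ then forces strictly fewer than $s - 1$ zeros of $(Bu)|_{[1,j_s-1]}$, contradicting the lower bound of $s$. Hence $i_s \ge j_s$ for every $s \in [c]$, and the annihilation argument from the first paragraph closes both the lemma and the claim.

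The main obstacle is the index-matching condition $i_s \ge j_s$, which is precisely what is needed to invoke total-non-singularity. The key observation that makes the induction work is that a failure of index-matching in dimension $N$ immediately produces, by restricting to the leading principal submatrix and using lower-triangularity, a failure of the lemma itself in strictly smaller dimension $j_s - 1 < N$. Everything else is routine bookkeeping about supports of $u$ and $Bu$ together with the definition of totally-non-singular matrices.
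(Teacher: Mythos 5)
Your argument is correct, but it takes a genuinely different route from the paper's, and a somewhat longer one. Both proofs boil down to the same endgame: exhibit a square submatrix of $A$ that (i) satisfies the index-matching condition $i_s\ge j_s$ of \cref{defn:TraingularTotallyNonSingular}, hence is non-singular, and (ii) annihilates the corresponding restriction of $x$, which is a vector with all entries nonzero — a contradiction. The difference lies in how the index-matching set is chosen. The paper simply takes $t$ to be the \emph{largest} integer with $i_1\ge j_1,\dots,i_t\ge j_t$ (which is $\ge 1$ since $i_1\ge\ell+1=j_1$), and then observes that if $t<r$ with $i_{t+1}<j_{t+1}$, lower-triangularity and the chain $i_t<i_{t+1}<j_{t+1}$ force rows $i_1,\dots,i_t$ of $A^{(k)}x$ to see only columns $j_1,\dots,j_t$ among $\mathcal{C}_x$; so the $t\times t$ submatrix already annihilates $x|_{\{j_1,\dots,j_t\}}$ even when $t<|\mathcal{C}_x|$. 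This is a single, short maximality argument with no recursion. You instead insist on showing that $t=|\mathcal{C}_x|$ always, and you achieve this by formulating a standalone lemma (support size of $u$ strictly exceeds the number of zeros of $Bu$ whenever $u_1\neq 0$) and proving it by induction on the ambient dimension, with the step passing to the leading $(j_s-1)\times(j_s-1)$ principal submatrix. Your route buys a clean, reusable statement and, as a bonus, directly shows that \emph{every} subset $I'\subseteq\mathcal{R}_x$ of the right size index-matches $\mathcal{C}_x$; the cost is the extra inductive scaffolding, which the paper avoids by not insisting on $t=|\mathcal{C}_x|$ in the first place. Both are sound; the paper's is the slicker one-shot argument, and yours is the cleaner-to-state but heavier-to-prove lemma.
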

 \begin{proof}
   Assume otherwise and suppose that $\mathcal{C}_x=\{j_1< j_2< \ldots< j_r\}$ and $\mathcal{R}_x=\{i_1< j_2< \ldots < i_s\}$ for some  $s\geq r$. By definition of $\mathcal{C}_x$, we have $j_1 = \ell+1$. Since $|\mathcal{R}_x| \geq |\mathcal{C}_x| \geq 1$, we have that $i_1 > \ell$ and hence $i_1 \geq j_1$.  Let $t$ be the maximum integer such that $i_1\geq j_1, i_2\geq j_2, \ldots, i_t\geq j_t$. Let us observe the action of the matrix $A[\{i_1, \ldots, i_t\}\mid \{j_1, \ldots, j_t\}]$ on the vector $x|_{\{j_1, \ldots, j_t\}}$. By maximality of $t$, we get that either $r=t$ or $i_{t+1} < j_{t+1}$. In either case, it is easy to see that $A[\{i_1, \ldots, i_t\}\mid \{j_1, \ldots, j_t\}](x|_{\{j_1, \ldots, j_t\}})=\mathbf{0}$. However, this is contradiction to the total non-singularity of $A$. Therefore, $|\mathcal{C}_x|>|\mathcal{R}_x|$.
   \end{proof}

\cref{clm:cgeqr} lends itself to the following alternate construction of a tree code which also has distance greater than $\nicefrac12$. For $v\in \F$ define $v'=(v,0) \in \F^{2}$ as the vector obtained by appending a zero to $v$. Similarly, for a vector $x\in \F^k$ let $x'\in (\F^2)^k$ be $(x_1',x_2',\ldots,x_k')$. Given a  totally-non-singular  lower-triangular matrix $A \in \F^{2n \times 2n}$, consider the $n$-truncated tree code $\tc^{(n)}_{A,(1,1)} \colon \F^{(\leq n)} \to (\F^2)^{(\leq n)}$ that maps $x \in \F^k$ to $A^{(2k)}(x')\in \F^{2k}$.  For a non-zero $x \in \F^k$, let $\ell =\splt(0^k,x)$. Let $x' \in \F^{2k}$ be the corresponding non-zero vector and let $\ell' = \splt(0^{2k},x)$. Clearly, $\ell'= 2\ell$. Let $y' = \tc^{(n)}_{A,(1,1)}(x')$ Define sets $\mathcal{C}_{x'}$ and $\mathcal{R}_{x'}$ as before. By definition, $|\mathcal{C}_{x'}| \leq \nicefrac12\cdot (2k-\ell') = k-\ell$ and $\wt_{\F}(y') = 2k -\ell' - |\mathcal{R}_x'| > 2k-\ell' - |\mathcal{C}_{x'}|$ (by \cref{clm:cgeqr}). Hence, $\wt_\F(y') > 2k - \ell' -(k-\ell) = k-\ell = \nicefrac12(2k-\ell')$. Hence, $\widetilde{\delta}$ distance of $\tc^{(n)}_{A,(1,1)}$ is greater than $\nicefrac12$.

This alternate construction has the advantage that it can be generalized to obtain MDS tree codes with distance $\delta_\tc> \nicefrac{r}{r+s}$ for integers $r,s\geq 1$.  Let $A\in \F^{(r+s)n\times (r+s)n}$ be a totally-non-singular  lower-triangular matrix. For $v\in \F^s$ define $v'\in \F^{r+s}$ as the vector obtained by appending $r$ zeros to $v$. Similarly, for a vector $x\in (\F^s)^k$ let $x'\in \F^{(r+s)k}$ be $(x_1',x_2',\ldots,x_k')$. 
For all $x\in (\F^s)^k$ and $k\leq n$ define  $\tc^{(n)}_{A,(s,r)}\colon(\F^s)^{(\leq n)}\to (\F^{r+s})^{(\leq n)}$ as
\[ x \in (\F^s)^k \longmapsto A^{(r+s)k}x' \in (\F^{(r+s)})^k. \]
An argument identical to the one above proves that this is an MDS code.

\begin{theorem}
\label{thm:MDScodes_r/(r+s)}
$\tc^{(n)}_{A,(s,r)}$ is a $n$-truncated MDS tree code with distance $\nicefrac{r}{r+s}$.
\end{theorem}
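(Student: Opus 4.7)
The plan is to imitate the alternate $r=s=1$ construction (spelled out in the paragraph immediately preceding the statement) and reduce the general case to \cref{clm:cgeqr} with a small amount of extra index bookkeeping.

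Concretely, I would fix a non-zero input $x \in (\F^s)^k$ for some $k \leq n$, set $\ell = \splt(0^k, x)$, form its padded version $x' \in \F^{(r+s)k}$, and consider the codeword $y = A^{(r+s)k} x' \in \F^{(r+s)k}$. Since $\wt_\F(y) \leq (r+s) \cdot \wt_{\F^{r+s}}(y)$, it suffices to show $\wt_\F(y) > r(k-\ell)$, which would imply $\wt_{\F^{r+s}}(y) > \frac{r}{r+s}(k-\ell)$, the distance required by \cref{defn:MDSTreeCodes}.

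Writing $\ell' := \splt(0^{(r+s)k}, x')$ and defining $\mathcal{C}_{x'}$, $\mathcal{R}_{x'}$ as in \cref{clm:cgeqr}, that claim applied to $x'$ yields $|\mathcal{R}_{x'}| < |\mathcal{C}_{x'}|$. The one place where the general $(r,s)$ case differs from $r=s=1$ (in which $\ell' = 2\ell$ on the nose) is that $\ell'$ may exceed $(r+s)\ell$. Specifically, if $j \in [s]$ denotes the index of the first non-zero coordinate of the block $x_{\ell+1} \in \F^s$, then the construction of $x'$ forces
\[
\ell' = (r+s)\ell + (j-1), \qquad |\mathcal{C}_{x'}| \leq s(k-\ell) - (j-1),
\]
the second bound coming from the fact that block $\ell+1$ of $x'$ contributes at most $s-(j-1)$ non-zero coordinates while each of the remaining $k-\ell-1$ blocks contributes at most $s$. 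Plugging into
\[
\wt_\F(y) \geq (r+s)k - \ell' - |\mathcal{R}_{x'}| > (r+s)k - \ell' - |\mathcal{C}_{x'}|,
\]
the $j$-dependence cancels and the right-hand side simplifies to $(r+s)(k-\ell) - s(k-\ell) = r(k-\ell)$, as required.

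The only subtlety is this $\ell$-vs-$\ell'$ bookkeeping, and it turns out the extra offset in $\ell'$ is exactly compensated by a matching loss in $|\mathcal{C}_{x'}|$, both arising from the single block $x_{\ell+1}$. Beyond that, the proof is a direct reuse of \cref{clm:cgeqr} as a black box together with the alternate construction argument sketched above, and I do not anticipate any genuine obstacle.
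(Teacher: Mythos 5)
Your proof is correct and follows exactly the route the paper intends: the paper's own ``proof'' is the single sentence that ``an argument identical to the one above proves that this is an MDS code,'' referring to the $(1,1)$ alternate-construction argument built on \cref{clm:cgeqr}, which is precisely what you carry out. Your extra bookkeeping for the offset $\ell' = (r+s)\ell + (j-1)$ and the matching loss in $|\mathcal{C}_{x'}|$ is the one detail the paper glosses over (for $s=1$ one has $\ell'=2\ell$ on the nose), and you handle it correctly; indeed the cruder bound $|\mathcal{C}_{x'}| \leq \frac{s}{r+s}\bigl((r+s)k-\ell'\bigr)$ used verbatim from the $(1,1)$ case would not quite suffice when $j>1$, so your sharper count is genuinely needed.
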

 

\subsection{Instantiating the MDS tree codes}

Pudl\'ak suggested various examples of triangular totally-nonsingular
matrices. One of these examples is the the Pascal Matrix defined as
follows:

\begin{definition}[Pascal matrix]\label{defn:PascalMatrix}
  The Pascal matrix $P \in
  \integers^{\integers \times \integers}$ is the matrix whose
  $(i,j)$-entry for $i,j \in \integers$ is the binomial coefficient $\binom{i}{j}$. For any $n
  \in \pintegers$, we let
  $P_n$ be the restriction of the (infinite) matrix $P$ to the first
  $(n+1)$ rows and $(n+1)$ columns. We will use the fact that for any $n$, the
  maximum value of any entry in $P_n$ is at most $2^n$.
\end{definition}

Gessel and Viennot~\cite{GesselV1985}, using Lindstr\"{o}m's combinatorial interpretation of certain determinants in terms of disjoint path systems~\cite{Lindstrom1973}, proved the following remarkable theorem which immediately yields the total-non-singularity of the Pascal matrix.

\begin{lemma}[~\cite{GesselV1985}, Corollary 2] 
Let $0\leq a_1< a_2< a_3<\ldots < a_n$ and $0\leq b_1< b_2< \ldots <b_n$ be two sequences of $n$ integers each. Define the $n\times n$ matrix $M$ by $M_{i,j}=\binom{a_i}{b_j}$. If $a_i\geq b_i$ for each $i\in [n]$ then $\det(M)>0$. 
\end{lemma}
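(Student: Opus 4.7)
The plan is to apply the Lindström--Gessel--Viennot (LGV) lemma, which rewrites such a determinant as a signed count of tuples of lattice paths. First I would realize each entry $\binom{a_i}{b_j}$ as a path count: place a source $A_i := (0,-a_i)$ and a sink $B_j := (b_j,-b_j)$ in $\mathbb{Z}^2$, and consider monotone lattice paths using unit east and unit north steps. A path from $A_i$ to $B_j$ must consist of $b_j$ east steps and $a_i-b_j$ north steps, so the number of such paths is exactly $\binom{a_i}{b_j}$ when $a_i\geq b_j$ and $0$ otherwise.

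Next I would apply the LGV lemma to the identity
\[
\det(M) = \sum_{\sigma \in S_n} \mathrm{sgn}(\sigma) \prod_{i=1}^n \#\{\text{paths } A_i \to B_{\sigma(i)}\},
\]
which reduces, via the standard sign-reversing involution that swaps tails at the first crossing of a path-system pair, to a signed count of vertex-disjoint path systems. In our configuration the $A_i$ descend along the vertical line $x=0$ as $i$ grows, while the $B_j$ march down and to the right along the anti-diagonal $y=-x$ as $j$ grows. A short topological argument then shows that for any $i_1<i_2$ and $j_1>j_2$, every monotone path from $A_{i_1}$ to $B_{j_1}$ must cross every monotone path from $A_{i_2}$ to $B_{j_2}$, so the only permutation that can arise from a vertex-disjoint system is the identity. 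Consequently $\det(M)$ equals the non-negative integer counting vertex-disjoint systems of paths $P_i\colon A_i\to B_i$ for $i=1,\dots,n$.

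It therefore remains to exhibit at least one such system. The hypothesis $a_i\geq b_i$ is exactly what makes a single path from $A_i$ to $B_i$ exist, since $a_i-b_i\geq 0$ north steps must be available. A canonical choice is the ``east-then-north'' path $P_i$: travel east $b_i$ steps along the horizontal line $y=-a_i$, then north $a_i-b_i$ steps up to $B_i=(b_i,-b_i)$. I expect the only substantive check is vertex-disjointness of this family: for $i<j$ one verifies that the horizontal leg of $P_i$ (at height $-a_i$) sits strictly above the horizontal leg of $P_j$ (at height $-a_j<-a_i$), and that the vertical legs occur at distinct abscissae $b_i<b_j$, so a routine coordinate comparison rules out every way the two paths could share a lattice point. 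Once this one vertex-disjoint system is exhibited, $\det(M)\geq 1>0$ follows from the LGV reduction above.
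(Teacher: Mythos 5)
Your proof is correct, and I note that the paper itself offers no proof of this lemma --- it is quoted verbatim from Gessel--Viennot, whose argument is exactly the Lindstr\"{o}m--Gessel--Viennot path-system method you reproduce (sources on the $y$-axis, sinks on the anti-diagonal, cancellation of non-identity permutations via forced crossings, and the explicit east-then-north disjoint system witnessing $\det(M)\geq 1$). All the key checks --- that $\binom{a_i}{b_j}$ counts the paths, that an inversion $i_1<i_2$, $\sigma(i_1)>\sigma(i_2)$ forces an intersection, and that your canonical system is vertex-disjoint precisely because $a_i\geq b_i$ and the sequences are strictly increasing --- go through as you describe.
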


Combining the above theorem with \cref{thm:lowertiandgularcode}, we
obtain the following result.

\begin{theorem}[tree code over integers~{\cite[Theorem~1.3]{CohenHS2018}}]\footnote{Cohen-Haeupler-Schulman obtained this theorem by using the inverse of the Pascal matrix instead. If a lower-triangular is totally-non-singular, then so is its inverse.}
\label{thm:inttreecode}
There exists an explicit tree code over the integers
$\tcz\colon \integers^{\integers} \to (\integers^2)^{\integers}$
with distance at least $1/2$ satisfying the property that for any $n
\in \pintegers$ and $i \in \{0,1\dots, n\}$, we have
$((\tcz(a_0,\dots, a_n))_i)_1 = a_i$ and  $((\tcz(a_0,\dots, a_n))_i)_2 \leq 2^n \max_{j} |a_j|$. 
\end{theorem}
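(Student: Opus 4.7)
The plan is to instantiate Theorem~\ref{thm:lowertiandgularcode} with $A$ equal to the Pascal matrix $P$ (viewed over $\mathbb{Q}$). The first step is to verify total non-singularity. For this I would invoke the quoted Gessel--Viennot determinant formula: any staircase submatrix $P[I|J]$ (with $|I|=|J|$ and $i_s\geq j_s$ for all $s$) is, up to re-indexing, a matrix of binomials $\binom{a_s}{b_t}$ with $a_s\geq b_s$, whose determinant is strictly positive. Hence $P$ satisfies Definition~\ref{defn:TraingularTotallyNonSingular}, and Theorem~\ref{thm:lowertiandgularcode} produces, for every $n$, a truncated linear tree code $\tc^{(n)}_{P_n}$ with distance strictly greater than $\nicefrac{1}{2}$.

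Next I would assemble these truncated codes into a single honest tree code. Because $P$ is lower triangular, the encoding rule
\[
\phi_i(a_0,\ldots,a_i)\;:=\;\Bigl(a_i,\ \sum_{j=0}^{i}\binom{i}{j}a_j\Bigr)
\]
does not depend on $n$: the $i$-th output of $\tc^{(n)}_{P_n}(a)$ depends only on the prefix $a_0,\ldots,a_i$. Defining $\tcz$ via these $\phi_i$ therefore yields a code whose restriction to inputs of length at most $n+1$ is exactly $\tc^{(n)}_{P_n}$, so $\delta_{\tcz}$ is an infimum of quantities each already exceeding $\nicefrac{1}{2}$, giving $\delta_{\tcz}\geq \nicefrac{1}{2}$. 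Since each $\phi_i$ is integer-valued on integer inputs, $\tcz$ indeed maps $\integers$-sequences to $(\integers^2)$-sequences.

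The remaining assertions are routine. By construction the first coordinate is $a_i$, and the second satisfies
\[
\Bigl|\sum_{j=0}^{i}\binom{i}{j}a_j\Bigr|\;\le\;\max_j |a_j|\cdot\sum_{j=0}^{i}\binom{i}{j}\;=\;2^i\max_j |a_j|\;\le\;2^n\max_j |a_j|,
\]
giving the claimed magnitude bound. Explicitness is immediate: an entire row of Pascal's triangle can be computed in time polynomial in $i$, so $\phi_i$ is computable in time polynomial in $i$ (and a fortiori in $n$). The only genuinely non-routine ingredient is the Gessel--Viennot lemma, which is available as a black box; every other step is bookkeeping, so I would not expect any real obstacle in the proof.
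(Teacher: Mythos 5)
Your proposal is correct and follows exactly the route the paper takes: total non-singularity of the Pascal matrix via the Gessel--Viennot lemma, fed into \cref{thm:lowertiandgularcode}, with the entry bound $\sum_j\binom{i}{j}=2^i\le 2^n$ and the observation that lower-triangularity makes the truncated codes consistent across $n$. The paper states this combination without spelling out the details; your write-up simply fills them in.
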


\section{Alphabet Reduction}\label{sec:ared}

In this section, we show how to convert the tree code over the
integers constructed in \cref{thm:inttreecode} to one over the binary
alphabet. By merely restricting the input alphabet of $\tcz$ in \cref{thm:inttreecode} to integers within the range $\{0,1\dots, 2^{s-1}\}$ and setting $n = s$, we obtain the following
corollary.

\begin{corollary}[tree code over large alphabet]
\label{cor:largealph}
For every $s \in \pintegers$, there exist an explicit tree code over the
input alphabet $\{0,1\}^{s}$, $\tc^{(s)}\colon \{\{0,1\}^s\}^{(\leq s)}
\to \{\{0,1\}^{3s}\}^{(\leq s)}$ with distance at least $1/2$.
\end{corollary}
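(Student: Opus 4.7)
My plan is to directly instantiate the integer tree code $\tcz$ from \cref{thm:inttreecode} with truncation parameter $n = s$ and then repackage its output into binary strings. First, I identify the input alphabet $\{0,1\}^s$ with the integers $\{0,1,\dots,2^s-1\}$ via the natural binary-expansion bijection, and restrict $\tcz$ to inputs drawn from this bounded range. For any input prefix $x \in \{0,1,\dots,2^s-1\}^k$ with $k \leq s$, the $i$-th output symbol of $\tcz$ is a pair in $\integers^2$ whose first coordinate equals $x_i$ (hence fits in $s$ bits) and whose second coordinate has absolute value at most $2^n \cdot \max_j |x_j| \leq 2^s \cdot (2^s - 1) < 2^{2s}$ (hence fits in at most $2s$ bits). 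Concatenating these two binary encodings yields a single alphabet symbol in $\{0,1\}^{3s}$, giving the desired mapping $\{\{0,1\}^s\}^{(\leq s)} \to \{\{0,1\}^{3s}\}^{(\leq s)}$.

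For the distance bound, the above repackaging $\integers^2 \to \{0,1\}^{3s}$ is injective on the relevant bounded range, so Hamming distance at the alphabet level is preserved exactly: distinct output symbols in $\integers^2$ map to distinct symbols in $\{0,1\}^{3s}$. Hence for any two distinct inputs $x \neq x'$ of common length $k \leq s$, the number of disagreeing positions in the binary output equals the number of disagreeing positions in the $\tcz$ output, which is at least $(k - \splt(x,x'))/2$ by \cref{thm:inttreecode}. Restricting the input domain can only shrink the family of pairs over which the distance infimum is taken, so nothing is lost on this front. Explicitness is likewise inherited, since the encoding functions of $\tcz$ are polynomial-time computable and the final bit-packing of a pair bounded by $2^{2s}$ takes $O(s)$ additional time per symbol, which is polynomial in $n = s$.

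The only real checkpoint is the bit-budget arithmetic, which reduces entirely to the explicit coordinate bound $((\tcz(\cdot))_i)_2 \leq 2^n \max_j|a_j|$ supplied by \cref{thm:inttreecode}. There is no genuine obstacle here: this is a packaging corollary of the integer-alphabet construction, with essentially all the work already done by \cref{thm:inttreecode}.
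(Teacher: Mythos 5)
Your proposal is correct and follows exactly the paper's route: the paper derives \cref{cor:largealph} in one line by restricting the inputs of $\tcz$ from \cref{thm:inttreecode} to integers in $\{0,1,\dots,2^{s}-1\}$ and setting $n=s$, with the $3s$-bit output symbol coming from the same packing ($s$ bits for the first coordinate, $2s$ bits for the second, bounded by $2^{s}\cdot 2^{s}$). You have simply spelled out the bit-budget and the injectivity/distance-preservation details that the paper leaves implicit.
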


The tree code constructed in the above corollary has excellent
distance albeit over a large super-constant alphabet (both with
respect to input and output). In the rest of this section, we will
show how to transform this tree code into one over the binary alphabet
and polylogarithmic output alphabet. 

We follow, roughly, the treatment of Cohen-Haeupler-Schulman~\cite{CohenHS2018}. (Also, see Pudl\'{a}k~\cite[Proposition~3.1]{Pudlak2016} for the standard well-known alphabet reduction procedure from a polynomial-sized output alphabet to constant-sized output alphabet. In some sense, the Cohen-Haeupler-Schulman reduction is an extension of this procedure to super-polynomial-sized alphabets.)

To this end, we define the notion of a lagged tree code. 

\begin{definition}[tree codes with lag]
Given positive integers $\ell \leq L$, a tree code
$\tc\colon\Sigma^* \to \Gamma^{*}$ is said to have $\ell$-lagged
distance $\delta$ if the (relative) $\ell$-lagged distance
$\delta_\tc^{\Lag \ell}$ (defined below) is at least $\delta$. It is said
to have $(\ell,L)$-lagged
distance $\delta$ if the (relative) $(\ell,L)$-lagged distance
$\delta_\tc^{\Lag (\ell,L)}$ (defined below) is at least $\delta$.
\begin{align*}
\delta_\tc^{\Lag \ell} & := \inf_{n, x\neq x' \in
                         \Sigma^n,n-\splt(x,x') \geq \ell}
             \frac{\Delta(\tc(x),\tc(x'))}{n-\splt(x,x')},\\
\delta_\tc^{\Lag (\ell,L)} & := \inf_{n, x\neq x' \in \Sigma^n, \ell
                             \leq n-\splt(x,x') \leq L}
             \frac{\Delta(\tc(x),\tc(x'))}{n-\splt(x,x')}.
\end{align*}
The truncated versions of these lagged codes are defined appropriately.
\end{definition}

In other words, lagged distance is a relaxation of the standard notion
of distance when we consider only pairs of strings $x,x'$ whose
$\splt(x,x')$ lies in a particular interval.

The rest of the construction proceeds as follows. In step 1, we show that composing the integer tree code in \cref{cor:largealph} with a standard error-correcting code yields a truncated tree code with constant $\ell$-lagged distance for some $\ell = \ell(s)$. In step 2, we show that we can then obtain an untruncated code with constant $(\ell, L)$-lagged distance by placing interleaving copies of the truncated code obtained in step 1 along the infinite tree for some fixed function $L = L(\ell)$. Finally, in step 3, we show show that a superimposition of several $(\ell,L)$-lagged distance codes for increasing values of $\ell$ and $L$ yields our final code. The size of the final alphabet is determined by how fast the  function $L=L(\ell)$ grows which is in turn determined by the function $\ell = \ell(s)$.

We begin by stating some basic error-correcting code that we need for step 1.

\begin{proposition}\label{prop:ecc}
For all $\delta\in[0,1)$ there are constants $s_\delta$ and $c_\delta$ such that for all $s\geq s_\delta$ there is an explicit code $\mathcal{C}\colon\set{0,1}^{3s}\to (\set{0,1}^{c_\delta})^s$ with (relative) distance  at least $\delta$ . 
\end{proposition}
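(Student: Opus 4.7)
The proposition asserts the existence of an explicit asymptotically good code of rate $3/c_\delta$ and relative distance $\delta$ over a constant-size alphabet $\{0,1\}^{c_\delta}$; my plan is to construct $\mathcal{C}$ by a standard concatenation of Reed--Solomon with a brute-force inner code.

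Set $q = 2^{c_\delta}$, choosing $c_\delta$ large enough that the $q$-ary Gilbert--Varshamov bound leaves a positive slack at relative distance $\delta$, i.e., $3/c_\delta < 1 - H_q(\delta')$ for some $\delta'$ slightly larger than $\delta$; this forces $c_\delta = \Omega(\log(1/(1-\delta)))$, still a constant in $s$. Take the outer code to be a Reed--Solomon code over $\mathbb{F}_{2^t}$ with $t = \lceil \log s \rceil$, of block length $s' = \Theta(s/\log s)$ (sized so that the composed block length is exactly $s$ after the inner expansion below) and rate $R_{\text{out}}$ close to $1$. For the inner code, take any code over the alphabet $\{0,1\}^{c_\delta}$ mapping $\{0,1\}^t$ into $(\{0,1\}^{c_\delta})^b$ with $b = O(t/c_\delta)$ and relative distance at least $\delta'$; the $q$-ary GV bound guarantees its existence, and since its input is only $O(\log s)$ bits, a suitable code can be found by exhaustive search in $\mathrm{poly}(s)$ time.

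Composing the two yields an explicit binary-input code whose output has $s' b = s$ symbols over $\{0,1\}^{c_\delta}$, whose message length in bits is $k_{\text{out}} t = 3s$ (set by choosing $k_{\text{out}} = 3s/t$), and whose relative $\{0,1\}^{c_\delta}$-distance is at least $(1 - k_{\text{out}}/s') \cdot \delta' = (1 - o(1)) \cdot \delta'$ for $s$ large enough, which exceeds $\delta$ and thus determines the constant $s_\delta$. The main obstacle is the regime $\delta$ close to $1$: the binary Plotkin bound rules out any ``rechop a binary code'' shortcut, so one genuinely needs a non-binary alphabet of size $\Omega(1/(1-\delta))$, which the $q$-ary GV ingredient with $q = 2^{c_\delta}$ accommodates. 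An alternative explicit route avoiding brute-force search would be to invoke algebraic--geometric codes from the Garcia--Stichtenoth tower meeting the Tsfasman--Vl\u{a}du\c{t}--Zink bound, which directly yield constant-rate, constant-alphabet codes of any prescribed relative distance $\delta < 1$.
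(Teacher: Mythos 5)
Your construction is correct in substance and reaches the same conclusion, but by a genuinely different route. The paper simply invokes, as a black box, explicit codes $\tilde{\mathcal{C}}\colon\{0,1\}^{3s}\to\Sigma_\delta^{3s/\alpha_\delta}$ of relative distance $\delta$ over a fixed alphabet $\Sigma_\delta$ (obtained by ABNNR distance amplification of a constant-rate, constant-distance code, or from algebraic--geometric codes), and then regroups the output into $s$ blocks of $3/\alpha_\delta$ symbols each, using that grouping output symbols cannot decrease relative distance; your closing remark about the Garcia--Stichtenoth tower is exactly the paper's second suggested instantiation. Your primary route --- Reed--Solomon concatenated with a short GV inner code over the $2^{c_\delta}$-ary alphabet --- is more elementary and self-contained, at the cost of parameter bookkeeping, and that is where three small slips should be fixed. (i) You call the outer Reed--Solomon code ``rate close to $1$,'' but your own distance bound $(1-k_{\mathrm{out}}/s')\cdot\delta'>\delta$ requires the outer \emph{distance} to be close to $1$, hence outer rate close to $0$; with your parameters $k_{\mathrm{out}}=3s/t$ and $s'=s/b$ the outer rate is $3b/t=\Theta(1/c_\delta)$, a small constant, so the construction is internally consistent but the phrase is not. (ii) Consequently $1-k_{\mathrm{out}}/s'$ equals $1-\Theta(1/c_\delta)$, a constant independent of $s$, not $1-o(1)$; the argument still closes because you first fix $\delta'\in(\delta,1)$ and then choose $c_\delta$ large enough that $\bigl(1-\Theta(1/c_\delta)\bigr)\delta'>\delta$, and it is this requirement (not the GV-rate condition $3/c_\delta<1-H_q(\delta')$ alone, since concatenation does not attain GV) that determines $c_\delta$. (iii) A literal exhaustive search over all inner codes is quasi-polynomial; what runs in $\mathrm{poly}(s)$ time is the greedy Gilbert--Varshamov construction (or a search over a polynomial-size ensemble), which suffices because the inner block length is $O(\log s)$. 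None of these affects the correctness of the construction itself.
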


\begin{proof}
  Using standard error-correcting codes, we know that for every $\delta \in [0,1)$, there are constants $n_\delta\in \pintegers$, $\alpha_\delta \in (0,1)$ and alphabet $\Sigma_\delta$ such that there exist explicit codes $\Tilde{\mathcal{C}}\colon\set{0,1}^{n}\to \Sigma_\delta^{n/\alpha_\delta}$ for all $n \geq n_\delta$ with distance $\delta$ over the alphabet $\Sigma_\delta$\footnote{There are several ways to obtain such codes. Take a constant-rate-constant-distance code and distance amplify it using the ABNNR distance amplification technique. Or take Algebraic Geometric codes.}.

  Set $s_\delta := n_\delta/3$ and for $s \geq s_\delta$, let $\Tilde{\mathcal{C}}\colon\set{0,1}^{3s}\to \Sigma_\delta^{3s/\alpha_\delta}$ be the above code for $n=3s$. Choose $c_\delta$ such that $2^{c_\delta} \geq |\Sigma_\delta|^{3/\delta}$ so that there exists an injective mapping $\Sigma_\delta^{3/\alpha_\delta} \hookrightarrow \{0,1\}^{c_\delta}$. Under this mapping, the code $\Tilde{\mathcal{C}}\colon\set{0,1}^{3s}\to \Sigma_\delta^{3s/\alpha_\delta}$ can be viewed as $\mathcal{C}\colon\set{0,1}^{3s}\to (\set{0,1}^{c_\delta})^s$. Clearly, the distance of $\mathcal{C}$ is at least that of $\Tilde{\mathcal{C}}$.
\end{proof}

\paragraph{Step 1: truncated code with constant $\ell$-lagged distance:} We describe the truncated-lagged tree code \[\tc^{(s)}_{\Lag \ell}\colon \{0,1\}^{(\leq s^2)}
\to \{\{0,1\}^{c_\delta}\}^{(\leq s^2)}\] that results by composing the tree code, $\tc^{(s)}\colon(\set{0,1}^s)^{(\leq s)}\to (\set{0,1}^{3s})^{(\leq s)}$, in \cref{cor:largealph}, which is a distance half tree code, with the code $\calC$ from \cref{prop:ecc}. 

Formally, let $\calC$ be the code from \cref{prop:ecc} with output alphabet
$\set{0,1}^{c_\delta}$ and distance $\delta$ and $a\in \naturals$ a large constant to be determined later.
For every $s \geq s_\delta$ and $\ell\geq as$\footnote{We think of $a$ as a constant and of $\ell$ and $s$ as growing.}
we describe an explicit truncated-$\ell$-lagged tree code over the
binary input alphabet, $\tc^{(s)}_{\Lag \ell}\colon \{0,1\}^{(\leq s^2)}
\to \{\{0,1\}^{c_\delta}\}^{(\leq s^2)}$ with $\ell$-lagged distance at least $\delta(\nicefrac{1}{2}-\nicefrac{3}{2a})$.

The encoding process is a follows. Consider any $x\in \set{0,1}^k$ and let $x'= (x'_1,\dots,x'_{\lfloor k/s\rfloor})\in(\set{0,1}^s)^{\floor*{k/s}}$ obtained by grouping together blocks of $s$ symbols and ignoring any leftover symbols. Let $y'=(y'_1,\dots,y'_{\lfloor k/s\rfloor})=\tc^{(s)}(x')$, so $y'\in (\set{0,1}^{3s})^{\floor*{k/s}}$ and $\calC(y')= (\calC(y'_1),\dots,\calC(y'_{\lfloor k/s\rfloor}))\in ((\set{0,1}^{c_\delta})^s)^{\floor*{k/s}}$. We construct $\tc^{(s)}_{\Lag \ell}(x)$ by letting for all $i\in \set{1,\ldots, s}$ and for all $1\leq j \leq \floor*{k/s}$ such that $js+i-1\leq k$:
\[
\tc^{(s)}_{\Lag \ell}(x)_{js+i-1}=(\calC(y')_j)_i. 
\]

In other words, we read groups of $s$ symbols from $x\in \set{0,1}^k$, say $x|_{[js+1, (j+1)s]}$ and interpret each block of $s$ symbols as a step in $\tc^{(s)}$. Thus, we obtain a symbol in $\set{0,1}^{3s}$ which we then encode to a symbol in $(\set{0,1}^{c_\delta})^s$ using $\calC$. Then we "write" this $s$-character long string on the appropriate $s$-length segment of $\tc^{(s)}_{\Lag \ell}(x)$, i.e., from $(j+1)s$ to $(j+2)s-1$.

\begin{proposition}\label{prop:truncated-laggedTreeCode}
For all $x\in\set{0,1}^k$ the encoding procedure of $\tc^{(s)}_{\Lag \ell}(x)$ is online and $\tc^{(s)}_{\Lag \ell}(x)_{[1,s-1]}=\varepsilon$ where $\varepsilon$ is the empty string.
\end{proposition}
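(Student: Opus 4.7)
The plan is to verify the two assertions by unpacking the defining formula $\tc^{(s)}_{\Lag \ell}(x)_{js+i-1}=(\calC(y')_j)_i$, since the proposition is really a book-keeping check on the block-encoding construction rather than a substantive claim about distance.

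For the empty-prefix statement, I would first observe that the formula assigns a value to position $m$ of the output only when $m$ can be written as $js+i-1$ with $j\geq 1$ and $i\in\{1,\ldots,s\}$. The smallest such position is $m = 1\cdot s + 1 - 1 = s$, so positions $1,2,\ldots,s-1$ are never written to, which is precisely the claim $\tc^{(s)}_{\Lag \ell}(x)_{[1,s-1]}=\varepsilon$.

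For the online property, I would take any position $m\geq s$ at which the encoding is defined and write $m=js+i-1$ with $j=\lfloor m/s\rfloor\geq 1$ and $i\in\{1,\ldots,s\}$ (uniquely determined by $m$). The value at position $m$ equals $(\calC(y'_j))_i$, which depends only on $y'_j$. By the online-encoding property inherited from the integer tree code $\tc^{(s)}$ of \cref{cor:largealph}, $y'_j=\tc^{(s)}(x')_j$ depends only on $x'_1,\ldots,x'_j$, which in turn are determined by $x_1,\ldots,x_{js}$. Since $js\leq js+i-1=m$, we conclude that $\tc^{(s)}_{\Lag \ell}(x)_m$ is a function of the first $m$ symbols of $x$, as required by \cref{defn:TreeCode}.

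There is no genuine obstacle here; the only point worth flagging is that the block decomposition used to feed $\tc^{(s)}$ does not destroy onlineness, because the $j$-th input block $x'_j$ consists of symbols $x_{(j-1)s+1},\ldots,x_{js}$, all of which are available by the time we reach the earliest output position $js$ at which any information about $x'_j$ is written.
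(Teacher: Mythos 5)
Your proof is correct and follows the same reasoning as the paper's, which simply observes that the encoding is online ``by construction'' and that no output is written before position $s$. You have merely unpacked that observation into explicit index arithmetic, which makes the same point more rigorously but does not differ in substance.
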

\begin{proof}
The online nature of the encoding is clear by construction. For the latter part notice that we don't begin "writing" the encoding of $\tc^{(s)}_{\Lag \ell}(x)$ until we see the first block of $s$ symbols and hence the encoding starts only from $\tc^{(s)}_{\Lag \ell}(x)_{s}$.
\end{proof}

\begin{lemma}\label{lemma:truncated-laggedTreeCode}
$\tc^{(s)}_{\Lag \ell}\colon \{0,1\}^{(\leq s^2)}
\to \{\{0,1\}^{c_\delta}\}^{(\leq s^2)}$ has $\ell$-lagged distance at least $\delta(\nicefrac{1}{2}-\nicefrac{3}{2a})$.
\end{lemma}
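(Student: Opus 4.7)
The plan is to fix any two distinct $x, x'' \in \{0,1\}^k$ (with $k \leq s^2$) whose split $p := \splt(x,x'')$ satisfies $k - p \geq \ell$, and lower bound the Hamming distance of their encodings in terms of $k - p$. The natural first move is to pass to the block picture: write $p = qs + r_p$ and $k = Ns + r_k$ with $0 \leq r_p, r_k < s$, and observe that the grouped inputs $x', x''' \in (\{0,1\}^s)^N$ agree on their first $q$ blocks but disagree on block $q+1$ (because $x$ and $x''$ disagree at position $p+1$, which lies in that block). Hence $\splt(x', x''') = q$.

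Next I would invoke the two ingredient codes in sequence. By \cref{cor:largealph}, $\tc^{(s)}$ has distance strictly greater than $1/2$, so the block-outputs $y' = \tc^{(s)}(x')$ and $y''' = \tc^{(s)}(x''')$ disagree in a number $D$ of coordinates with $D > (N-q)/2$. For each block $j \in \{q+1,\ldots,N\}$ where $y'_j \neq y'''_j$, \cref{prop:ecc} guarantees that $\calC(y'_j)$ and $\calC(y'''_j)$ differ in at least $\delta s$ of their $s$ symbols of $\{0,1\}^{c_\delta}$. By the explicit embedding $\tc^{(s)}_{\Lag \ell}(x)_{js+i-1} = (\calC(y'_j))_i$, every such differing block $j$ with $j \leq N-1$ contributes at least $\delta s$ coordinates of disagreement to $\tc^{(s)}_{\Lag \ell}(x)$ vs $\tc^{(s)}_{\Lag \ell}(x'')$, since all $s$ of its output slots lie within $[1,k]$. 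The only block whose contribution I cannot control cleanly is the last one ($j = N$), whose output range $[Ns, k]$ has length $r_k + 1 \leq s$ and could in principle contain none of the $\geq \delta s$ disagreements of $\calC(y'_N)$ vs $\calC(y'''_N)$.

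Being conservative about the last block, I get $\Delta\bigl(\tc^{(s)}_{\Lag \ell}(x), \tc^{(s)}_{\Lag \ell}(x'')\bigr) \geq \delta s (D - 1) > \delta s \bigl((N-q)/2 - 1\bigr)$. The remaining task is bookkeeping to convert the block count $N - q$ back into $k - p$. Since $k - p = (N - q) s + (r_k - r_p)$ and $|r_k - r_p| < s$, we have $N - q > (k-p)/s - 1$, so
\[
\Delta\bigl(\tc^{(s)}_{\Lag \ell}(x), \tc^{(s)}_{\Lag \ell}(x'')\bigr) > \delta s \left(\frac{k-p}{2s} - \frac{1}{2} - 1\right) = \frac{\delta (k-p)}{2} - \frac{3 \delta s}{2}.
\]
Finally, the lag hypothesis $k - p \geq \ell \geq a s$ lets me absorb the additive slack: $3\delta s/2 \leq 3\delta(k-p)/(2a)$, yielding $\Delta \geq \delta(k-p)\bigl(\tfrac{1}{2} - \tfrac{3}{2a}\bigr)$, as required.

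The only step that requires any real care is the accounting of the partial last block together with the conversion from the block-level split $q$ to the symbol-level split $p$; everything else is a direct composition of the distance guarantees of $\tc^{(s)}$ and $\calC$. The role of the lower bound $\ell \geq as$ is precisely to make the additive loss from this single truncated block negligible relative to $k - p$, and this is where the constant $3/(2a)$ arises.
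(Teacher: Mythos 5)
Your proof is correct and follows essentially the same route as the paper's: pass to the block picture, apply the distance-$\nicefrac12$ bound of $\tc^{(s)}$ to count differing blocks, amplify each differing block by $\delta s$ via $\calC$, subtract one block for the possibly-truncated last block, and absorb the additive slack using $k-p \geq as$. The only difference is that your bookkeeping is spelled out more explicitly (e.g., identifying $\splt(x',x''')=q$ exactly rather than only bounding it from above, and tracking $r_k, r_p$ instead of directly stating $N-\splt(x',x''') \geq \lfloor (k-p)/s\rfloor$), but the intermediate quantities and the final bound match the paper's proof.
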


\begin{proof}
We need to show that for all $u,v\in \set{0,1}^k$ such that $b=k-\splt(u,v)\geq \ell$ we have \[\frac{\Delta(\tc^{(s)}_{\Lag \ell}(u),\tc^{(s)}_{\Lag \ell}(v))}{b}\geq \delta(\nicefrac{1}{2}-\nicefrac{3}{2a}).\] Notice that $|u'|-\splt(u',v')\geq \floor*{b/s}$ where $u'$ and $v'$ are analogues of $x'$ for $u$ and $v$ respectively. By distance of tree code $\tc^{(s)}$, we have $\Delta(\tc^{(s)}(u'), \tc^{(s)}(v'))\geq \frac{\floor*{b/s}}{2}$. By composition with $\calC$ we obtain that $\Delta(\tc^{(s)}_{\Lag \ell}(u),\tc^{(s)}_{\Lag \ell}(v))\geq \delta s (\floor*{b/s}\cdot \nicefrac{1}{2}-1)$. The $\floor*{b/s}-1$ term arises as we may not have enough length to encode $\calC(u')|_{\floor*{k/s}}$, i.e., it may be that $k<\floor*{k/s}s+s-1$.

Thus, $\frac{\Delta(\tc^{(s)}_{\Lag \ell}(u),\tc^{(s)}_{\Lag \ell}(v))}{b}\geq \frac{\delta s (\floor*{b/s}\cdot \nicefrac{1}{2}-1)}{b}\geq \delta(\nicefrac{1}{2}-\nicefrac{3}{2a})$ since $b \geq \ell \geq as$. 
\end{proof}

\paragraph{Step 2: untruncated code with constant $(\ell,O(\ell^2)$-lagged distance:}
We use interleaving copies of the above truncated-$\ell$-lagged distance $\delta(\nicefrac{1}{2}-\nicefrac{3}{2a})$ tree code to
construct a un-truncated-$(\ell,O(\ell^2))$-lagged distance $\delta(\nicefrac{1}{2}-\nicefrac{3}{2a})$ tree code.
Formally, let $\calC$ be the code from \cref{prop:ecc} with output alphabet
$\set{0,1}^{c_\delta}$ and distance $\delta$. For all $\ell \geq
as$ (where $s\geq s_\delta$ as in \cref{prop:ecc}), we describe an explicit-$(\ell,O(\ell^2))$-lagged tree code over the binary input alphabet and output alphabet $\{0,1\}^{2 c_\delta}$, \[\tc_{\Lag \ell}\colon \{0,1\}^{*}
\to {(\{0,1\}^{2 c_\delta})}^{*}.\]
Consider any $x\in \set{0,1}^k$. We describe the encoding of $x$, i.e., $\tc_{\Lag \ell}(x)$ below. Let $i\leq k$ be $j(s^2/2)+r$ where $r<s^2/2$.\footnote{We assume that $s$ is even for simplicity.} Then,
\[
\tc_{\Lag \ell}(x)_i = (\tc^{(s)}_{\Lag \ell}(x|_{[(j-1)s^2/2,i]}),\tc^{(s)}_{\Lag \ell}(x|_{[(j)s^2/2,i]})).
\]
Notice that if $j=0$ then $x|_{[(j-1)s^2/2,i]}$ is the empty string.

In other words, we place a $\ell$-lagged tree code $\tc^{(s)}_{\Lag \ell}\colon \{0,1\}^{s^2}
\to \{\{0,1\}^{c_\delta}\}^{s^2}$ at intervals of length $s^2/2$. So, for every $x\in \set{0,1}^k$ and $i>s^2/2$ there are two $\ell$-lagged tree codes whose encodings are "written" on $\tc_{\Lag \ell}(x)_i$.

\begin{proposition}\label{prop:untruncated-laggedTreeCode}
For all $x\in\set{0,1}^k$ the encoding procedure of $\tc_{\Lag \ell}(x)$ is online and $\tc_{\Lag \ell}(x)|_{[1,s-1]}=\varepsilon$ where $\varepsilon$ is the empty string. 
\end{proposition}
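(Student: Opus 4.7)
The plan is to deduce both claims directly from the construction by lifting the analogous properties of the inner truncated lagged code $\tc^{(s)}_{\Lag \ell}$ established in \cref{prop:truncated-laggedTreeCode}. Since $\tc_{\Lag \ell}(x)_i$ is defined as a pair of symbols drawn from two interleaved copies of $\tc^{(s)}_{\Lag \ell}$, both properties should pass through componentwise.

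First I would verify online-ness. Fix $i \leq k$ and write $i = j(s^2/2) + r$ with $0 \leq r < s^2/2$. The two components of $\tc_{\Lag \ell}(x)_i$ come from applying $\tc^{(s)}_{\Lag \ell}$ to $x|_{[(j-1)s^2/2, i]}$ and $x|_{[j s^2/2, i]}$ respectively; both of these substrings depend only on $x_1, \ldots, x_i$. Since $\tc^{(s)}_{\Lag \ell}$ is itself online (by \cref{prop:truncated-laggedTreeCode}), the symbol read at position $i$ from each of the two inner encodings is determined by the corresponding prefix. Combining, $\tc_{\Lag \ell}(x)_i$ depends only on $x_1, \ldots, x_i$.

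Next I would handle the empty-prefix claim. For $1 \leq i \leq s-1$, assuming $s \geq 2$, we have $j = 0$. The first component is then the empty string by the convention stated immediately after the construction (for $j=0$, the substring $x|_{[(j-1)s^2/2, i]}$ is interpreted as $\varepsilon$, and $\tc^{(s)}_{\Lag \ell}(\varepsilon) = \varepsilon$). The second component equals the $i$-th symbol of $\tc^{(s)}_{\Lag \ell}(x|_{[0,i]})$, which lies inside the first $s-1$ output positions and is therefore empty by the second part of \cref{prop:truncated-laggedTreeCode}. Hence $\tc_{\Lag \ell}(x)|_{[1,s-1]} = \varepsilon$. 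There is no real obstacle here beyond boundary bookkeeping; the only subtlety is being careful with the $j=0$ case, where one of the inner encoders receives an input with a negative starting index that must be read as the empty string.
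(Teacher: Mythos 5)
Your proof is correct and follows the same route as the paper: the paper's own proof simply asserts online-ness "by construction" and derives the empty prefix from $\tc^{(s)}_{\Lag\ell}(x)|_{[1,s-1]}=\varepsilon$ of \cref{prop:truncated-laggedTreeCode}, exactly the two facts you invoke. Your version just spells out the componentwise bookkeeping (including the $j=0$ convention) that the paper leaves implicit.
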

\begin{proof}
The online nature of the encoding is by construction.

$\tc_{\Lag \ell}(x)|_{[1,s-1]}=\varepsilon$  follows from $\tc^{(s)}_{\Lag \ell}(x)_{[1,s-1]}=\varepsilon$ as shown in \cref{prop:truncated-laggedTreeCode}. 
\end{proof}

\begin{lemma}\label{lemma:untruncated-laggedTreeCode}
$\tc_{\Lag \ell}\colon \{0,1\}^{*}
\to {(\{0,1\}^{2 c_\delta})}^{*}$ has $(\ell,O(\ell^2))$-lagged distance at least $\delta(\nicefrac{1}{2}-\nicefrac{3}{2a})$.
\end{lemma}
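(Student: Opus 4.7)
My plan is to locate, for each pair $u \neq v \in \{0,1\}^k$ with $b = k - \splt(u,v) \in [\ell, L]$, a single copy of $\tc^{(s)}_{\Lag \ell}$ among the interleaved copies placed every $s^2/2$ positions whose input window fully contains both the first differing coordinate $p+1$ and the endpoint $k$. Once such a copy is in hand, \cref{lemma:truncated-laggedTreeCode} delivers the desired distance bound inside that copy, and because that copy's output sits as one coordinate of the pair encoded into each symbol of $\tc_{\Lag \ell}$, the overall Hamming distance of $\tc_{\Lag \ell}(u),\, \tc_{\Lag \ell}(v)$ is at least the copy's own distance, with no further constant factor lost.

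Concretely, I would set $L := s^2/2$, which is $O(\ell^2)$ because $\ell \geq as$ forces $s^2 \leq \ell^2/a^2$. Writing $p = \splt(u,v)$ and $b = k - p$ with $\ell \leq b \leq L$, I look for a copy index $J \geq 0$ with $J \cdot (s^2/2) \in [\max(0, k - s^2),\, p]$. If $k \leq s^2$ then $J = 0$ already works. Otherwise the interval has length $s^2 - b \geq s^2/2$, exactly the spacing between successive copy start-positions, so it must contain some multiple of $s^2/2$; pick any such as $J \cdot (s^2/2)$.

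Given such a $J$, the restrictions $u|_{[J(s^2/2),\, k-1]}$ and $v|_{[J(s^2/2),\, k-1]}$ have length at most $s^2$ (hence are valid inputs to $\tc^{(s)}_{\Lag \ell}$), their internal split is $p - J(s^2/2)$, and their internal differing-suffix length equals $b \geq \ell$. \cref{lemma:truncated-laggedTreeCode} therefore yields Hamming distance at least $\delta(\nicefrac{1}{2} - \nicefrac{3}{2a}) \cdot b$ between the two copy-$J$ encodings. Since every output symbol of $\tc_{\Lag \ell}$ at a position where copy $J$ is active carries copy $J$'s output as one of its two coordinates, any disagreement contributed by copy $J$ propagates to a disagreement of the corresponding symbol of $\tc_{\Lag \ell}$. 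Dividing by $b$ closes out the argument.

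The existence step in the second paragraph is where the $O(\ell^2)$ upper endpoint on the lagged range enters in an essential way, and is the only non-trivial piece of the argument: once $b > s^2/2$, no single copy sees the whole differing suffix, and one would have to combine partial contributions from several overlapping copies, losing constant factors and complicating the reduction. Everything else is a direct reduction to the truncated-lagged guarantee.
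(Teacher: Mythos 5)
Your proposal is correct and matches the paper's own argument: both locate a single interleaved copy of $\tc^{(s)}_{\Lag \ell}$ whose window contains the entire differing suffix (using that $b \leq s^2/2$ leaves an interval of length $s^2 - b \geq s^2/2$, guaranteed to contain a copy start), and then invoke \cref{lemma:truncated-laggedTreeCode} on that copy. Your write-up is in fact slightly more careful than the paper's about why such a copy index exists.
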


\begin{proof}
 We show that for all $u,v\in \set{0,1}^k$ such that $b=k-\splt(u,v)\geq\ell$ and $b\leq s^2/2$ we have $\frac{\delta(\tc_{\Lag \ell}(u),\tc_{\Lag \ell}(v))}{b}\geq \delta(\nicefrac{1}{2}-\nicefrac{3}{2a})$.
By construction we know that there is some $j$ such that $\splt(u,v)-js^2/2\leq s^2/2$ and $k-js^2/2\leq s^2$. Thus the encoding of $u$ and $v$ which include $\tc^{(s)}_{\Lag \ell}(u|_{[js^2/2,k]})$ and $\tc^{(s)}_{\Lag \ell}(v|_{[js^2/2,k]})$ differ on at least $\delta(\nicefrac{1}{2}-\nicefrac{3}{2a})b$ locations by virtue of $\tc^{(s)}_{\Lag \ell}$. Thus, $\tc_{\Lag \ell}$ is a $(\ell,s^2/2)$-lagged tree code with distance $\delta(\nicefrac{1}{2}-\nicefrac{3}{2a})$.
\end{proof}

\paragraph{Step 3: tree code with polylogarithmic output alphabet}
Finally, we superimpose several copies of the above $(\ell,O(\ell^2))$-lagged
tree code for varying $\ell$ to get an explicit family of constant distance $n$-truncated tree codes, $\set{\tc^{(n)}}_n$ with binary input alphabet and output alphabet of size $O(\polylog(n))$.
Formally, there exists a family of alphabets $\{\Gamma_n\}_{n}$ such that
$|\Gamma_n|= \polylog(n)$ with the following property. 
There exists a constant-distance tree code $\tc^{(n)}\colon\{0,1\}^{(\leq n)} \to
\Gamma_n^{(\leq n)}$ such that the encoding functions are constructible in time polynomial in $n$. $\tc^{(n)}$ is a superimposition of $\polylog(n)$ many copies of the $(\ell,O(\ell^2))$-lagged
tree code for varying $\ell$ such that for any two strings $u,v$, it is always the case that $\splt(u,v)$ lies in the stipulated range for one of these codes which witnesses the distance. We describe $\tc^{(n)}$ formally below.

For concreteness let us take $\delta=1/4$, $a=6$ and $\ell=6s$ in \cref{lemma:untruncated-laggedTreeCode}. Then for all $s\geq s_{\delta}$ we have a  $(6s, s^2/2)$-lagged tree code with distance $1/16$. In other words, for $\ell\geq 6s_{1/4}$ we have a distance-$1/16$-$(\ell,\ell^2/72)$-lagged tree code. 
To construct $\tc^{(n)}$ we will superimpose the encodings of $\tc_{\Lag \ell_i}$ for all $i\in \set{1, \ldots, j}$ where $\ell_1=6s_{1/4}$ and $\ell_{i+1}=\ell_i^2/72$
and $\ell_j^2/72=n$, thus $j=O(\log\log(n))$. \footnote{We ignore divisibility issues for the sake of simplicity.}
More precisely, for all $x\in \set{0,1}^k$ let 
$$\tc^{(n)}(x)_i=((x|_{[i-6s_{1/4},i]}),(\tc_{\Lag \ell_1}(x))_i,\ldots,(\tc_{\Lag \ell_j}(x))_i).$$ 
Hence, $\tc^{(n)}(x)\in \Gamma_n^k$ where $|\Gamma_n|=O(\polylog(n))$ and the encoding functions take time polynomial in $n$ since each coordinate is computable in time polynomial in $n$.

\begin{remark}\label{rem:honesttogodtreecode}
  The above construction yields a $n$-truncated tree code. To obtain a honest-to-god untruncated tree code, we observe that the Cohen-Haeupler-Schulman construction can be done incrementally in the following sense. \cref{prop:truncated-laggedTreeCode,prop:untruncated-laggedTreeCode,prop:binarytreecode} state that the initial part of the tree code encodings are empty. Hence, we might begin with a constant-sized alphabet (without printing the $\varepsilon$) and grow the alphabet as and when needed
  \end{remark}

\begin{proposition}\label{prop:binarytreecode}
For all $x\in \set{0,1}^n$, $i\in[n]$ and $g$ such that $\ell_g>6(i+1)$ we have $\tc_{\Lag\ell_g}(x)_i=\varepsilon$. Hence,  $\tc^{(n)}(x)_i\in \Gamma_i $ where $|\Gamma_i|=O(\polylog(i))$.
\end{proposition}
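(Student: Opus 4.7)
The plan is to chase the empty-prefix properties of the lagged codes through the Step~3 superimposition. First I would translate the hypothesis $\ell_g > 6(i+1)$ into the block-size language: since Step~3 fixes $\ell = 6s$, the $g$-th superimposed code $\tc_{\Lag \ell_g}$ uses underlying block-size $s_g = \ell_g / 6$, so the hypothesis rewrites as $s_g > i+1$, equivalently $i \leq s_g - 1$.

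With this translation in hand, the first assertion reduces directly to facts already recorded. \cref{prop:truncated-laggedTreeCode} says $\tc^{(s_g)}_{\Lag \ell_g}(x)_{[1, s_g - 1]} = \varepsilon$, and \cref{prop:untruncated-laggedTreeCode} propagates this empty prefix through the Step~2 interleaving to give $\tc_{\Lag \ell_g}(x)|_{[1, s_g - 1]} = \varepsilon$. Reading off coordinate $i$ yields $\tc_{\Lag \ell_g}(x)_i = \varepsilon$, as required.

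For the second assertion I would count how many of the $j$ superimposed codes are active at position $i$. By construction $\tc^{(n)}(x)_i$ is the tuple made of the fixed-size window $x|_{[i - 6s_{1/4}, i]}$ together with the coordinates $(\tc_{\Lag \ell_g}(x))_i$ for $g = 1, \dots, j$. By the first part, only those $g$ satisfying $\ell_g \leq 6(i+1)$ can contribute a non-empty symbol. The recursion $\ell_{g+1} = \ell_g^2/72$ makes $\log \ell_g$ essentially double at each step, so the number of such $g$ is $O(\log \log i)$. Each contributing coordinate lies in the constant-size alphabet $\{0,1\}^{2 c_\delta}$, so the effective alphabet at position $i$ has size $2^{O(\log \log i)} = O(\polylog(i))$, matching the claim.

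The only subtlety I anticipate is bookkeeping: aligning the ``$-1$'' in $[1, s_g - 1]$ with the ``$+1$'' in $6(i+1)$ so that the strict inequality in the hypothesis genuinely places $i$ inside the empty prefix, and tracking the doubly-exponential growth of $\ell_g$ carefully enough to conclude $j(i) = O(\log \log i)$. Beyond these routine estimates no new ideas are needed, since the work of establishing distance and online encoding has already been done in \cref{lemma:untruncated-laggedTreeCode}.
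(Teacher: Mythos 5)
Your proposal is correct and follows essentially the same route as the paper: translate $\ell_g > 6(i+1)$ into $i \leq s_g-1$ via $\ell_g = 6s_g$, invoke \cref{prop:truncated-laggedTreeCode} and \cref{prop:untruncated-laggedTreeCode} to get the empty prefix, and then count the $O(\log\log i)$ surviving coordinates to bound $|\Gamma_i|$. The paper's own proof is a one-liner that leaves the counting implicit, so your write-up is simply a more explicit version of the same argument.
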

\begin{proof}
From \cref{prop:untruncated-laggedTreeCode} (with $a=6$) we know that $\tc_{\Lag\ell_j}(x)_i=\varepsilon$. In particular, we have that 
$$\tc^{(n)}(x)_i=((x|_{[i-6s_{1/4},i]}),(\tc_{\Lag \ell_1}(x))_i,\ldots,(\tc_{\Lag \ell_g}(x))_i,\varepsilon,\ldots, \varepsilon).$$
\end{proof}

Putting the above together we get that there exists a family of alphabets $\{\Gamma_n\}_{n}$ such that $|\Gamma_n|= O_\eta(\polylog(n))$ with the following property. Let $\Gamma=\cup_n \Gamma_n$. There exists a tree code $\tc\colon \{0,1\}^* \to
\Gamma^*$ such that the encoding functions $\phi_i$ satisfy
satisfy $\phi_i\colon \{0,1\}^i \to \Gamma_i$. Furthermore, the encoding functions $\phi_n$ are constructible in time polynomial in $n$.

\begin{theorem}\label{thm:binarytreecode}
There exists a family of alphabets $\{\Gamma_n\}_{n}$ such that
$|\Gamma_n|= O_\eta(\polylog(n))$ with the following property. Let $\Gamma=
\cup_n \Gamma_n$. There exists a tree code $\tc\colon \{0,1\}^* \to
\Gamma^*$ with distance $\nicefrac{1}{16}$ such that the encoding functions $\phi_i$ satisfy $\phi_i\colon \{0,1\}^i \to \Gamma_i$. Furthermore, the encoding functions $\phi_n$ are constructible in time polynomial in $n$.
\end{theorem}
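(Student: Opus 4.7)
The plan is to follow the three-stage construction outlined in the exposition above, using only the distance-$\nicefrac12$ tree code of \cref{cor:largealph}, the off-the-shelf binary ECC of \cref{prop:ecc}, and the notion of lagged distance. First I would fix $\delta = \nicefrac14$, $a = 6$, and a block size $s \geq s_{\nicefrac14}$, and compose $\tc^{(s)}$ with $\calC$ as follows: read the input $x$ in windows of $s$ symbols, apply $\tc^{(s)}$ to each completed window to obtain a symbol in $\{0,1\}^{3s}$, encode it via $\calC$ into $s$ symbols of $\{0,1\}^{c_\delta}$, and write this block onto the next $s$-length segment of the output. The online property falls out of the construction, while the distance of $\tc^{(s)}$ composed with the distance of $\calC$ yields the truncated-$\ell$-lagged tree code $\tc^{(s)}_{\Lag \ell}$ of \cref{lemma:truncated-laggedTreeCode} with $\ell$-lagged distance $\nicefrac{1}{16}$ whenever $\ell \geq 6s$.

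Second, I would lift this truncated construction to an untruncated one by interleaving: place independent copies of $\tc^{(s)}_{\Lag \ell}$ starting at offsets $j \cdot s^2/2$ and concatenate their outputs coordinate-wise. For any $u \neq v$ whose split gap $b = k - \splt(u,v)$ satisfies $\ell \leq b \leq s^2/2$, at least one copy completely contains the divergent segment of $u$ and $v$, so the $\nicefrac{1}{16}$ distance is inherited (\cref{lemma:untruncated-laggedTreeCode}). This produces $\tc_{\Lag \ell}\colon \{0,1\}^* \to (\{0,1\}^{2c_\delta})^*$, an $(\ell, \ell^2/72)$-lagged tree code with distance $\nicefrac{1}{16}$.

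Third, I would superimpose a geometric tower of such lagged codes at scales $\ell_1 = 6 s_{\nicefrac14}$, $\ell_{i+1} = \ell_i^2 / 72$, continuing until the first index $j$ with $\ell_j^2 / 72 \geq n$; because the recurrence squares at each step, this requires only $j = O(\log\log n)$ many codes. For any two distinct inputs of length at most $n$, the value $b = k - \splt(u,v)$ either falls below $\ell_1$ (handled by also appending the literal window $x|_{[i - 6 s_{\nicefrac14},\, i]}$ at coordinate $i$, which costs only $O(1)$ extra alphabet symbols) or lies inside some interval $[\ell_i, \ell_i^2/72]$, where the corresponding $\tc_{\Lag \ell_i}$ certifies Hamming distance at least $\nicefrac{1}{16} \cdot b$. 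Since each lagged code contributes $O(1)$ bits per coordinate, the $i$-th alphabet has size $O(\log \log i) = O(\polylog(i))$, and each encoding function is computable in time polynomial in $n$ because the MDS tree code, the ECC, the interleaving, and the superposition all are. To promote this $n$-truncated family to an honest-to-god untruncated tree code $\tc\colon \{0,1\}^* \to \Gamma^*$, I would invoke \cref{rem:honesttogodtreecode} together with \cref{prop:binarytreecode}: since $\tc_{\Lag \ell_g}(x)_i = \varepsilon$ whenever $\ell_g > 6(i+1)$, the $i$-th coordinate genuinely sits in an alphabet $\Gamma_i$ of size $O(\polylog(i))$, and the infinite union $\Gamma = \cup_n \Gamma_n$ serves as the ambient output alphabet.

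The main obstacle is the distance-accounting in step one: composing with $\calC$ yields distance roughly $\delta s (\lfloor b/s \rfloor \cdot \nicefrac12 - 1) / b$, and one has to verify that enforcing $\ell \geq a s$ for a sufficiently large constant $a$ (here $a=6$) makes this at least $\delta(\nicefrac12 - \nicefrac{3}{2a}) = \nicefrac{1}{16}$ uniformly in $b \geq \ell$; the rest of the argument is a clean geometric covering of all possible split gaps by the tower of lagged codes, and the alphabet bound reduces to counting $O(\log\log n)$ contributions of constant size each.
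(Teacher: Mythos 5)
Your proposal follows essentially the same three-stage construction the paper uses: compose the MDS tree code of \cref{cor:largealph} with the ECC of \cref{prop:ecc} to get a truncated $\ell$-lagged code, interleave shifted copies to get an untruncated $(\ell,\ell^2/72)$-lagged code, and superimpose a doubly-exponential tower of $O(\log\log n)$ such codes plus a constant-width literal window, concluding distance $\geq \nicefrac{1}{16}$ by covering all split gaps $b$. The only slip is in the final alphabet count: $O(\log\log i)$ is the number of \emph{bits} per coordinate (each of the $O(\log\log i)$ superimposed lagged codes contributes $2c_\delta$ bits), so the alphabet \emph{cardinality} is $2^{O(\log\log i)} = \polylog(i)$, not ``$O(\log\log i)$'' as written; the conclusion $|\Gamma_i| = O(\polylog(i))$ is nonetheless what the theorem requires.
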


\begin{proof}


The tree code we consider is the tree code $\tc$ as mentioned above the theorem statement.
We have already shown that $\tc\colon \{0,1\}^* \to \Gamma^*$ is such that the encoding functions $\phi_i$ satisfy $\phi_i\colon \{0,1\}^i \to \Gamma_i$. Furthermore, the encoding functions $\phi_n$ are constructible in time polynomial in $n$.

Next, we show that the distance of $\tc^{(n)}$ is at least $1/16$. Consider $u,v\in \set{0,1}^k$ and let $b=k-\splt(u,v)$. If $b\leq 6s_{1/4}$ then for all $i\in \set{\splt(u,v)+1,k}$ we have $(\tc^{(n)}(u))_i\neq (\tc^{(n)}(v))_i$ as they differ in the first coordinate. Now, suppose $\ell_i\leq b\leq \ell_i^2/72$ for some $1\leq i\leq j$. Then, by virtue of $\tc_{\Lag \ell_i}$ we know that $\frac{\Delta(\tc^{(n)}(u),\tc^{(n)}(v))}{b}\geq 1/16$. In any case, we have $\delta_{\tc^{(n)}}\geq 1/16$.
\end{proof}

Notice that if we use the boosting trick mentioned in \cref{thm:MDScodes_r/(r+s)} we obtain a distance $r/(r+s)$ tree code over the integers for any $r,s\in\naturals$. If we use this tree code in the procedure detailed in the proof of \cref{thm:binarytreecode} we obtain $\tc^{(n)}$ with distance $\left(1-O\left(\frac{r}{r+s}-O\left(\frac1a\right)\right)\delta\right)$ where $\delta$ is the distance of $\calC$ in \cref{prop:ecc} and $a=\ell/s$ is as describe in the proof of \cref{lemma:truncated-laggedTreeCode}. Hence, by choosing $r/(r+s)$ and $\delta$ close to $1$ and $a$ large enough we have established \cref{thm:binarytreecodearbitrarydistance}, which is restated below for convenience.

\binarytreecodearbitrarydistance*

\section{Open Questions} 
Recall that the $\tcz$ of \cref{thm:inttreecode} used in \cref{cor:largealph} was an explicit tree code over the integers
$\tcz\colon \integers^{\integers} \to (\integers^2)^{\integers}$
with distance at least $1/2$ satisfying the property that for any $n
\in \pintegers$ and $i \in \{0,1\dots, n\}$, we have
$|(\tcz(a_0,\dots, a_n))_i| \leq 2^n \max_{j} |a_j|^2$. This in turn was used in step $1$ of the reduction process to small alphabets. 
If instead our $\tcz$ had the property that $|(\tcz(a_0,\dots, a_n))_i)| \leq 2^{\polylog(n)} \max_{j} |a_j|^2$, then the above reduction process to small alphabets would actually yield constant-sized alphabets. 
\footnote{Strictly speaking, $|(\tcz(a_0,\dots, a_n))_i)| \leq 2^{f(n)} \max_{j} |a_j|^2$ yields a tree code whose (relative) $(2f(n), nf(n))$-lagged distance is constant. Thus, any function $f(n)$ would suffice for constant-sized alphabets as long as we can show that there exists a constant $c$ such that $\underbrace{g\circ g\circ \ldots \circ g}_{c-\text{times}}(n)=O(\log\log(n))$ where $g(nf(n))=2f(n)$. }
Since, $\tcz$ of \cref{thm:inttreecode} was obtained using a totally-non-singular lower-triangular matrix (specifically the Pascal Matrix), we are led to the following question about the existence of such $n\times n$ matrices with entries only $2^{\polylog(n)}$ which if true would imply tree codes with constant distance and  constant-sized alphabets. Furthermore, if the matrices are explicit then we would get explicit tree codes with constant distance and constant-sized alphabets.

 \begin{question}[Existence of totally-non-singular lower-triangular matrices over small alphabets]
 Are there $n\times n$ totally-non-singular lower-triangular matrices as defined in \cref{defn:TraingularTotallyNonSingular} with entries of size $O(2^{\polylog(n)})$. Further, are there such matrices which are explicit?
 (Pudl\'{a}k~\cite[Problem~1]{Pudlak2016} states this question asking whether the entries of such a matrix can be $\mathrm{poly}(n)$.)
 \end{question}

The next question asks about the decoding process to accompany the constant-distance tree codes of \cref{thm:binarytreecodearbitrarydistance}.

\begin{question}[Efficient decoding of tree codes]
Is there an efficient decoding procedure for the tree codes of \cref{thm:binarytreecodearbitrarydistance} which corrects a constant fraction of errors?
\end{question}

There has been recent work on this question by Narayanan and Weidner~\cite{NarayananW2020}. The authors use a randomized polynomial time decoding algorithm to correct errors which scale roughly as the block length to the three-fourths power, falling short of the constant fraction error correction guaranteed by the constant distance.




\section*{Acknowledgements} We thank Madhu Sudan for various discussions while preparing this exposition, in particular, we thank him for suggesting the alternate construction of distance $\nicefrac12$-tree codes over the integers mentioned after \cref{clm:cgeqr}.

{\small
\bibliographystyle{prahladhurl}
\bibliography{treecodes-bib}
}

\end{document}